\newif\ifFull
\begin{document}
\ifFull\else
\pagestyle{plain}
\fi

\title{External-Memory Network Analysis Algorithms for Naturally Sparse Graphs}

\author{Michael T. Goodrich \and Pawe\l{} Pszona}

\institute{
Dept.~of Computer Science \\
University of California, Irvine
}

\maketitle

\begin{abstract}
In this paper,
we present a number of network-analysis algorithms in the
external-memory model.
We focus on methods for
large naturally sparse graphs, that is, $n$-vertex graphs that
have $O(n)$ edges and are structured so that this sparsity property
holds for any subgraph of such a graph.
We give efficient external-memory algorithms for the following
problems for such graphs:
\begin{enumerate}
\setlength{\itemsep}{0pt}
\item
Finding an approximate $d$-degeneracy ordering.
\item
Finding a cycle of length exactly $c$.
\item
Enumerating all maximal cliques.
\end{enumerate}
Such problems are of interest, for example, in the analysis of social
networks, where they are used to study network cohesion.
\end{abstract}

\section{Introduction}
Network analysis studies the structure of relationships between
various entities, with those entities represented as vertices
in a graph and their relationships represented as edges in that
graph (e.g., see~\cite{Doreian1994267}).
For example, such structural analyses
include link-analysis for Web graphs,
centrality and cohesion measures in social networks,
and network motifs in biological networks.
In this paper, we are particularly
interested in network analysis algorithms for finding various kinds
of small subgraphs and graph partitions
in large graphs that are likely to occur in practice.
Of course, this begs the question of what kinds of graphs are likely
to occur in practice.

\subsection{Naturally Sparse Graphs}
A network property addressing the concept of a
``real world'' graph that is gaining in prominence is
the \emph{$k$-core number}~\cite{Seidman1983269},
which is equivalent to a graph's \emph{width}~\cite{Freuder:1982},
\emph{linkage}~\cite{kirousis:626},
\emph{$k$-inductivity}~\cite{i-cigo-94},
and \emph{$k$-degeneracy}~\cite{aks-lids-87,lw-kdg-70},
and is one less than its
Erd{\H{o}}s-Hajnal coloring number~\cite{erdos66}.
A $k$-core, $G'$, in a graph, $G$, is a maximal connected subgraph
of $G$ such that each vertex in $G'$ has degree at least $k$.
The $k$-core number of a graph $G$ is the maximum $k$ such that $G$
has a non-empty $k$-core.
We say that a graph $G$ is \emph{naturally sparse} if its $k$-core
number is $O(1)$.
This terminology is motivated by the fact that
almost every $n$-vertex graph with $O(n)$ edges has a bounded
$k$-core number, since
Pittel {\it et al.}~\cite{Pittel1996111} show that
a random graph with $n$ vertices and $cn$ edges
(in the Erd{\H{o}}s-R{\'e}nyi model)
has $k$-core number at most $2c+o(c)$, with high probability.
\pagebreak
Riordan \cite{riordan-2008-17} and Fernholz and Ramachandran \cite{fernholz}
have also studied $k$-cores in random graphs.

In addition, we also have the following:
\begin{itemize}
\setlength{\itemsep}{0pt}
\item
Every $s$-vertex subgraph of a naturally sparse graph is naturally
sparse, hence, has $O(s)$ edges.
\item
Any planar graph has $k$-core number at most $5$, hence, is naturally
sparse.
\item
Any graph with bounded arboricity is naturally sparse
(e.g., see~\cite{chrobak91}).
\item
Eppstein and Strash~\cite{es-real-11} verify experimentally that real-world
graphs in four different data repositories all have small $k$-core
numbers relative to their sizes; hence, these real-world
graphs give an empirical motivation for naturally sparse graphs.
\item
Any network generated by
the Barab{\'a}si-Albert~\cite{Barabi1999} preferential attachment process,
with $m\in O(1)$, or as in Kleinberg's small-world
model~\cite{k-tswp-00}, is naturally sparse.
\end{itemize}
Of course, one can artificially define an $n$-vertex graph, $G'$, with $O(n)$
edges that is not naturally sparse just by creating a clique of
$O(n^{1/2})$ vertices in an $n$-vertex
graph, $G$, having $O(n)$ edges. We would argue,
however, that such a graph $G'$ would not arise ``naturally.''
We are interested in algorithms for large, naturally
sparse graphs.

\subsection{External-Memory Algorithms}
One well-recognized way of designing algorithms for processing
large data sets is to formulate such algorithms in the
\emph{external memory model} (e.g., see the
excellent survey by Vitter \cite{vitter}).
In this model,
we have a single
CPU with main memory capable of storing $M$ items and that computer
is connected to $D$ external disks that are capable of storing a
much larger amount of data.
Initially, we assume the parallel disks are storing an input of size $N$.
A single I/O between one of the external disks and main memory
is defined as either reading a block of $B$ consecutively stored items
into memory or writing a block of the same size to a disk.
Moreover, we assume that this can be done
on all $D$ disks in parallel if need be.

Two fundamental primitives of the model are \emph{scanning} and \emph{sorting}.
Scanning is the operation of streaming $N$ items stored on $D$ disks through
main memory, with I/O complexity
\begin{displaymath}
scan(N) = \Theta\left(\frac{N}{DB}\right),
\end{displaymath}
and sorting $N$ items has I/O complexity
\begin{displaymath}
sort(N) = \Theta\left(\frac{N}{DB}\log_{M/B}\frac{N}{B} \right),
\end{displaymath}
e.g., see Vitter~\cite{vitter}.

Since this paper concerns graphs, we assume
a problem instance is a graph $G=(V,E)$, with $n = |V|$, $m = |E|$ and
$N = |G| = m + n$.
If $G$ is $d$-degenerate, that is, has $k$-core number, $d$,
then $m \leq dn$ and $N = O(dn) = O(n)$ for $d=O(1)$.
We use $d$ to denote the $k$-core number of an input graph,
$G$, and we use the term ``$d$-degenerate'' as a shorthand for
``$k$-core number equal to $d$.''

\subsection{Previous Related Work}
Several researchers have studied algorithms for
graphs with bounded $k$-core numbers (e.g.,
see~\cite{AloGut-Algo-09,short_cycles,els2010,GolVil-WG-08,i-cigo-94}).
These methods are often based on the fact that the vertices
in a graph with $k$-core number, $d$, can be ordered by
repeatedly removing a vertex of degree at most $d$,
which gives rise to a numbering of the
vertices, called a \emph{$d$-degeneracy ordering}
or \emph{Erd{\H{o}}s-Hajnal sequence},
such that each vertex has
at most $d$ edges to higher-numbered vertices.
In the RAM model, this greedy algorithm
takes $O(n)$ time
(e.g., see~\cite{batagelj2003}).
Bauer {\it et al.}~\cite{bkw-eglkd-10} describe methods for generating
such graphs and their $d$-degeneracy orderings at random.

In the internal-memory RAM model, Eppstein {\it et al.}~\cite{els2010}
show how to find all maximal cliques in a $d$-degenerate graph in
$O(d3^{d/3}n)$ time.
Alon {\it et al.}~\cite{short_cycles} show that one can find a cycle
of length exactly $c$, or show that one does not exist, in
a $d$-degenerate graph in time
$O(d^{1-1/k}m^{2-1/k})$, if $c=4k-2$,
time $O(dm^{2-1/k})$, if $c=4k-1$ or $4k$,
and
time
$O(d^{1+1/k}m^{2-1/k})$, if $c=4k+1$.

A closely related concept to a $d$-degeneracy ordering is
a \emph{$k$-core decomposition} of a graph, which is
a labeling of each vertex $v$ with the largest $k$
such that $v$ belongs to a $k$-core.
Such a labeling can also be produced by the simple
linear-time greedy algorithm
that removes a vertex of minimum degree with each iteration.
Cheng {\it et al.}~\cite{ckco-ecdmn-11} describe recently an external-memory
method for constructing a $k$-core decomposition,  but
their method is unfortunately fatally flawed\footnote{We contacted the
  authors and they confirmed that their method is
  indeed incorrect.}.
The challenge in producing a $k$-core decomposition or $d$-degeneracy
ordering in external memory is that the standard greedy method, which
works so well in internal memory, can cause a large number of I/Os
when implemented in external memory.
Thus, new approaches are needed.

\subsection{Our Results}
In this paper, we present efficient external-memory
network analysis algorithms for
naturally sparse graphs (i.e., degenerate graphs with small degeneracy).
First, we give a simple algorithm
for computing a $(2+\epsilon)d$-degeneracy ordering
of a $d$-degenerate
graph $G=(V,E)$, without the need to know the value of $d$
in advance.
The I/O complexity of our algorithm is $O(sort(dn))$.
\ifFull
This approximation algorithm gives us an approximate way to for a
type of $k$-core decomposition, which is of interest in measuring
network cohesion in social networks
(e.g., see~\cite{Doreian1994267}).
\fi

Second, we give an algorithm
for determining whether a $d$-degenerate
graph $G=(V,E)$ contains a simple cycle of a fixed length $c$.
This algorithm uses
$O\Big(d^{1\pm\epsilon}\cdot \big(k\cdot sort(m^{2-\frac{1}{k}}) + (4k)!\cdot scan(m^{2-\frac{1}{k}})\big)\Big)$
I/O complexity,
where $\epsilon$ is a constant depending on $c\in\{4k-2,\ldots,4k+1\}$.

Finally, we present an algorithm
for listing all maximal cliques of an undirected $d$-degenerate
graph $G=(V,E)$, with $O(3^{\delta/3}sort(dn))$ I/O complexity,
where $\delta = (2+\epsilon)d$.

One of the key insights to our second and third results is to show
that, for the sake of designing efficient external-memory algorithms,
using a $(2+\epsilon)d$-degeneracy ordering is almost as good as a
$d$-degeneracy ordering.
\ifFull
Indeed, we feel that this insight could lead to other efficient
external-memory (and parallel) algorithms for naturally sparse
graphs.
\fi
In addition to this insight, there are a number of technical details
that lead to our results, which we outline in the remainder of this
manuscript.

\ifFull
The rest of this paper is organized as follows: in Sect. \ref{degeneracy_section}, we give
the algorithm for computing approximate degeneracy ordering of a graph. In Sect. \ref{cycles_section},
we describe an algorithm for finding cycles of given length in a degenerate graph. In Sect.
\ref{max_cliques_section}, we present an algorithm that lists all maximal cliques of a degenerate
graph. Finally, we conclude the paper in Sect. \ref{conclusion_section}. Necessary details
of our algorithms are given in the appendix.
\fi

\section{Approximating a $d$-Degeneracy Ordering} \label{degeneracy_section}

Our method for constructing a $(2+\epsilon)d$-degeneracy ordering
for a $d$-degenerate graph, $G=(V,E)$,
is quite simple and is given below as
Algorithm~\ref{alg1}.
Note that our algorithm does not take into account the value of $d$,
but it assumes we are given a constant $\epsilon>0$ as part of the
input.
Also, note that this algorithm destroys $G$ in the process. If one
desires to maintain $G$ for other purposes, then one should first
create a backup copy of $G$.

\begin{algorithm}
\begin{algorithmic}[1]
\STATE $L \leftarrow \emptyset$
\WHILE{$G$ is nonempty}
  \STATE $S \leftarrow n\epsilon /(2+\epsilon)$ vertices of smallest degree in $G$
  \STATE $L \leftarrow L | S$ \emph{$\qquad\qquad\qquad$ // append $S$ to the end of $L$}
  \STATE remove $S$ from $G$
\ENDWHILE
\RETURN $L$
\end{algorithmic}
\caption{Approximate degeneracy ordering of vertices\vspace*{-18pt}}
\label{alg_degorder}
\label{alg1}
\end{algorithm}
\begin{lemma} \label{alg_degorder_lemma_1}
If $G$ is a $d$-degenerate graph, then
Algorithm \ref{alg_degorder} computes a $(2+\epsilon)d$-degeneracy
ordering of $G$.
\end{lemma}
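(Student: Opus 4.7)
The plan is to show that at the start of every iteration, the $\lceil n'\epsilon/(2+\epsilon)\rceil$ vertices of smallest degree in the current graph $G$ (with $n'$ the current number of vertices) all have degree at most $(2+\epsilon)d$, from which the degeneracy bound on $L$ follows immediately.

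First I would record the two structural facts that drive everything. The first is that $d$-degeneracy is hereditary: any subgraph $G'$ of $G$ is itself $d$-degenerate, and therefore satisfies $|E(G')|\le d\,|V(G')|$. In particular, each intermediate graph produced by the algorithm is $d$-degenerate, so every iteration can be analyzed against the same parameter $d$. The second fact is the definition: an ordering $L=v_1,v_2,\ldots,v_n$ is a $k$-degeneracy ordering if for every $i$, the vertex $v_i$ has at most $k$ neighbors in $\{v_{i+1},\ldots,v_n\}$. Equivalently, at the moment $v_i$ is appended to $L$, it has degree at most $k$ in the graph that still contains $\{v_i,v_{i+1},\ldots,v_n\}$.

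The main step is an averaging argument on the current graph $G$ of $n'$ vertices. Since $G$ is $d$-degenerate, $\sum_{v\in V(G)}\deg(v)=2|E(G)|\le 2d n'$. Hence the number of vertices with $\deg(v)>(2+\epsilon)d$ is strictly less than $2dn'/((2+\epsilon)d)=2n'/(2+\epsilon)$, so the number of vertices with $\deg(v)\le (2+\epsilon)d$ is at least $n'-2n'/(2+\epsilon)=n'\epsilon/(2+\epsilon)$. Consequently, when the algorithm takes the $n'\epsilon/(2+\epsilon)$ vertices of smallest degree into $S$, every vertex of $S$ has degree at most $(2+\epsilon)d$ in the current $G$.

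To finish, I would observe that for any vertex $v\in S$, all of its neighbors in the current $G$ are either in $S$ (and thus placed into $L$ in the same batch, but we are free to order them arbitrarily within the batch) or remain in $G$ after $S$ is removed (and are thus placed later in $L$). Either way, each such neighbor is later than $v$ in $L$, so $v$ has at most $(2+\epsilon)d$ later neighbors in $L$. Since this holds in every iteration, $L$ is a $(2+\epsilon)d$-degeneracy ordering. The only minor obstacle is handling the within-batch edges of $S$ cleanly; I would dispose of it by noting that the bound $(2+\epsilon)d$ counts all current neighbors of $v$, so it holds no matter how $S$ is linearized internally, and by implicitly assuming $n'\epsilon/(2+\epsilon)\ge 1$ via a ceiling (or by simply processing the last few vertices individually when $n'$ is smaller than $(2+\epsilon)/\epsilon$).
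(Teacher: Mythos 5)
Your proposal is correct and follows essentially the same route as the paper's proof: the counting/averaging bound showing at most $2n'/(2+\epsilon)$ vertices can have degree at least $(2+\epsilon)d$ (so the batch $S$ consists only of low-degree vertices), combined with the hereditary nature of $d$-degeneracy to carry the argument inductively through subsequent iterations. You merely spell out two details the paper leaves implicit --- the degree-sum derivation of the $2n'/c$ bound and the observation that within-batch neighbors are harmless because the degree bound counts all current neighbors --- which is a welcome but not substantively different elaboration.
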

\begin{proof}
Observe that any $d$-degenerate graph with $n$
vertices has at most $2n/c$ vertices of
degree at least $cd$.
Thus, $G$ has at most $2n/(2+\epsilon)$ vertices of degree
at least $(2+\epsilon)d$.
This means that the $n\epsilon/(2+\epsilon)$ vertices of smallest
degree in $G$ each have degree at most $(2+\epsilon)d$.
Therefore, every element of set $S$ created in line 3
has at most $(2+\epsilon)d$ neighbors in (the remaining graph) $G$.
When we add $S$ to $L$ in line 4,
we keep the property that every element of $L$ has at most
$(2+\epsilon)d$ neighbors in $G$ that are placed behind it in $L$.
Furthermore,
note that,
after we remove vertices in $S$
(and their incident edges) from $G$ in line 5, $G$ is still at
most $d$-degenerate (every subgraph of a
$d$-degenerate graph is at most $d$-degenerate);
hence, an inductive argument applies to the remainder of the
algorithm.
\qed
\end{proof}

Note that,
after $\lceil\log_{(2+\epsilon)/2} (dn)\rceil = O(\lg n)$ iterations,
we must have processed all of $G$ and placed all its vertices
on $L$, which is a $(2+\epsilon)d$-degeneracy ordering
for $G$ and that this property holds even though
the algorithm does not take the value of $d$ into account.

The following lemma is proved in the Appendix.

\begin{lemma} \label{alg_degorder_lemma_2}
An iteration of
the \verb\while\ loop (lines 3-5)
of Algorithm \ref{alg_degorder} can be
implemented in $O(sort(dn))$ I/O's in the external-memory model,
where $n$ is the number of vertices
in $G$ at the beginning of the iteration.
\end{lemma}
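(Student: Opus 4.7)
The plan is to observe that one iteration of the \verb|while| loop consists of four substeps---computing vertex degrees, selecting the $n\epsilon/(2+\epsilon)$ lowest-degree vertices, appending them to $L$, and deleting them together with their incident edges from $G$---and to show that each can be implemented with a constant number of sorts and scans over the current edge list, whose size is bounded by $m \le dn$ since every subgraph of a $d$-degenerate graph is $d$-degenerate.

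First, I would maintain $G$ as an edge list in which each undirected edge $\{u,v\}$ appears twice, once as $(u,v)$ and once as $(v,u)$, together with a separate vertex list. To compute degrees, sort the edge list by first coordinate and scan once to emit, for each vertex $v$, a pair (vertex, $\deg(v)$); this costs $O(sort(dn))$ I/Os. To build $S$, sort the (vertex, degree) pairs lexicographically by (degree, vertex id) and take the prefix of length $\lceil n\epsilon/(2+\epsilon)\rceil$. Re-sort $S$ by vertex id so that it can later be merged against the edge list, and append $S$ to the on-disk list $L$ with one scan. All of this is $O(sort(n))$, which is absorbed into $O(sort(dn))$.

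Finally, to delete $S$ from $G$, I would carry out two merge-join passes against the edge list. Sort the edges by their first endpoint and merge-scan with $S$, marking each edge whose first endpoint lies in $S$; then sort the edges by the second endpoint and repeat. A concluding scan discards all marked edges and writes out the compacted edge list to be used in the next iteration. Each of these passes is dominated by an $O(sort(dn))$ sort, so the constant number of sorting and scanning passes yields the claimed $O(sort(dn))$ bound per iteration. No step presents a genuine obstacle; the only care needed is to keep the edge and vertex lists compacted at the end of each iteration, so that the ``$n$'' referenced in the lemma statement is consistently the vertex count of the graph remaining at the start of the iteration under analysis.
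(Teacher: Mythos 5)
Your proposal is correct and follows essentially the same route as the paper's proof: compute degrees by sorting and scanning the $O(dn)$-size edge list, select and append $S$ with a sort of the (degree, vertex) pairs, and delete edges incident to $S$ with a constant number of further sorts and merge-scans. The only difference is cosmetic---the paper marks doomed edges by inserting tombstone tuples $(u,v,\texttt{"-"})$ and re-sorting so each marker lands next to its edge, whereas you use two sort-and-merge-join passes against $S$ (one per endpoint position); both cost $O(sort(dn))$ per iteration.
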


Thus, we have the following.
\begin{theorem}
We can compute a $(2+\epsilon)d$-degeneracy ordering
of a $d$-degenerate graph, $G$,
in $O(sort(dn))$ I/O's in the external-memory model, without knowing
the value of $d$ in advance.
\end{theorem}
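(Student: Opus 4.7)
The plan is to combine the two lemmas by exploiting that the number of vertices shrinks geometrically across iterations of the while loop, so the per-iteration costs form a geometric series whose sum is dominated by the first iteration.

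First, I would verify correctness using Lemma~\ref{alg_degorder_lemma_1}: once the algorithm terminates, $L$ contains all vertices of $G$ in an order such that every vertex has at most $(2+\epsilon)d$ neighbors appearing later in $L$, i.e., $L$ is a $(2+\epsilon)d$-degeneracy ordering. Crucially, the algorithm never inspects $d$; it only removes the bottom $\epsilon/(2+\epsilon)$ fraction of vertices by degree, so knowledge of $d$ is unnecessary.

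Next I would analyze the I/O complexity. Let $n_i$ denote the number of vertices remaining at the start of iteration $i$, so $n_0 = n$. Since each iteration removes $n_i \epsilon/(2+\epsilon)$ vertices, we have
\begin{displaymath}
n_{i+1} \;=\; \frac{2}{2+\epsilon}\, n_i \;=\; \left(\frac{2}{2+\epsilon}\right)^{\!i+1} n,
\end{displaymath}
so the vertex counts decrease geometrically with ratio $\rho = 2/(2+\epsilon) < 1$. Because every subgraph of a $d$-degenerate graph is $d$-degenerate, the graph at the beginning of iteration $i$ has at most $d\, n_i$ edges, and by Lemma~\ref{alg_degorder_lemma_2} iteration $i$ costs $O(sort(d\, n_i))$ I/O's. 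The algorithm terminates after $O(\log n)$ iterations, as noted following Lemma~\ref{alg_degorder_lemma_1}.

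The final step is to sum the per-iteration bounds. Using $sort(x) = \Theta\!\left(\tfrac{x}{DB}\log_{M/B}\tfrac{x}{B}\right)$, which is quasi-linear in $x$, we obtain
\begin{displaymath}
\sum_{i\ge 0} sort(d\, n_i) \;=\; \sum_{i\ge 0} O\!\left(\frac{d\, n\, \rho^{i}}{DB}\log_{M/B}\frac{dn}{B}\right) \;=\; O(sort(dn)) \cdot \sum_{i\ge 0} \rho^{i} \;=\; O(sort(dn)),
\end{displaymath}
where the last equality uses the geometric series with ratio $\rho<1$ (a constant depending only on $\epsilon$) and the fact that $\log_{M/B}(dn_i/B) \le \log_{M/B}(dn/B)$. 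The mildly tricky point, and the main thing to be careful about, is handling the logarithmic factor inside $sort$: because it is monotone in $x$ and we upper-bound it by its value at $x = dn$, the sum remains a genuine geometric series and collapses to the first-term bound, yielding the claimed $O(sort(dn))$ total I/O complexity.
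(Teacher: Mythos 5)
Your proposal is correct and follows essentially the same route as the paper: invoke Lemma~\ref{alg_degorder_lemma_1} for correctness, Lemma~\ref{alg_degorder_lemma_2} for the per-iteration cost, and sum the resulting geometric series with ratio $2/(2+\epsilon)$ to get $O(sort(dn))$. Your explicit handling of the logarithmic factor inside $sort(\cdot)$ (bounding it by its value at $dn$) is a small point the paper leaves implicit, but the argument is the same.
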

\begin{proof}
\ifFull
From Lemma \ref{alg_degorder_lemma_1}, we know that
Algorithm \ref{alg_degorder} computes
a $(2+\epsilon)d$-degeneracy ordering of $G$.
We have yet to show that it uses
$O(sort(dn))$ I/O's.
\fi
Since the number of vertices of $G$ decreases by a factor of $2/(2+\epsilon)$
in each iteration, and each iteration uses
$O(sort(dn))$ I/O's, where $n$ is the number of vertices in $G$
at the beginning of the iteration
(by Lemma \ref{alg_degorder_lemma_2}), the total number of I/O's, $I(G)$, is
bounded by
\begin{eqnarray*}
I(G) & = & O\left(sort(dn) + sort\Big(\big(2/(2+\epsilon)\big)dn\Big) +
  sort\Big(\big(2/(2+\epsilon)\big)^2dn \Big) + \cdots\right) \\
\ifFull
& \leq & O\left(\frac{dn}{DB}\log_{M/B}(dn) +
  \frac{2}{2+\epsilon}\cdot\frac{dn}{DB}\log_{M/B}(dn) + \cdots \right) \\
\fi
& = & O\left(sort(dn)\bigg(1 + \frac{2}{2+\epsilon} +
  \Big(\frac{2}{2+\epsilon}\Big)^2 + \cdots\bigg) \right) \\
& = & O(sort(dn)). \\[-36pt]
\end{eqnarray*}
\qed
\end{proof}

This theorem hints at the possibility of
effectively using a $(2+\epsilon)d$-degeneracy ordering in place of
a $d$-degeneracy ordering in external-memory algorithms for naturally
sparse graphs.
As we show in the remainder of this paper,
achieving this goal is indeed possible, albeit with some
additional alterations from
previous internal-memory algorithms.

\section{Short Paths and Cycles} \label{cycles_section}

In this section,
we present external-memory algorithms for finding short cycles
in directed or undirected graphs.
Our approach is an external-memory adaptation
of internal-memory algorithms
by Alon {\it et al.}~\cite{short_cycles}.
We begin with the definition and an example of a \emph{representative}
due to Monien \cite{monien_paths}. A $p$-set is a set of size $p$.

\begin{definition}[representative]
Let $\mathcal{F}$ be a collection of $p$-sets. A sub-collection $\widehat{\mathcal{F}} \subseteq\mathcal{F}$
is a \emph{$q$-representative} for $\mathcal{F}$, if for every $q$-set $B$, there exists a set
$A \in \mathcal{F}$ such that $A \cap B = \emptyset$ if and only if there exists a set
$\widehat{A} \in \widehat{\mathcal{F}}$ with this property.
\end{definition}

Every collection of $p$-sets $\mathcal{F}$ has a
$q$-representative $\widehat{\mathcal{F}}$ of size at most $\binom{p+q}{p}$
(from Bollob{\'a}s~\cite{springerlink:10.1007/BF01904851}).
An optimal representative, however, seems difficult to find.
Monien \cite{monien_paths} gives a construction of representatives of size
at most $O(\sum_{i=1}^{q}p^i)$.
It uses a $p$-ary tree of height $\leq q$ with the following properties.
\begin{itemize}
\item Each node is labeled with either a set $A \in \mathcal{F}$ or a special symbol $\lambda$.

\item If a node is labeled with a set $A$ and its depth is less than $q$, it has
exactly $p$ children, edges to which are labeled with elements from $A$ (one element per edge,
every element of $A$ is used to label exactly one edge).

\item If a node is labeled with $\lambda$ or has depth $q$, it has no children.

\item Let $E(v)$ denote the set of all edge labels on the way from the vertex $v$ to the root of the tree.
Then, for every $v$:
  \begin{itemize}
  \item[--] if $v$ is labeled with $A$, then $A \cap E(v) = \emptyset$

  \item[--] if $v$ is labeled with $\lambda$, then there are no $A \in \mathcal{F}$ s.t.
   $A \cap E(v) = \emptyset$.
  \end{itemize}
\end{itemize}

Monien shows that if a tree $T$ fulfills the above conditions,
defining $\widehat{\mathcal{F}}$ to be the set of all labels of the tree's nodes yields
a $q$-representative for $\mathcal{F}$.
As an example, consider a collection of $2$-sets, $\mathcal{F} = \{\{2,4\}, \{1,5\}, \{1,6\},
\{1,7\}, \{3,6\}, \{3,8\}, \\ \{4,7\}, \{4,8\}\}$.
Fig. \ref{fig_3} presents $\widehat{\mathcal{F}}$, a $3$-representative of $\mathcal{F}$ in the tree form.

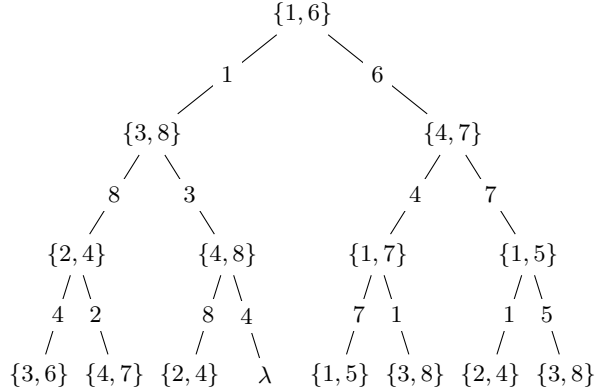
\begin{figure}
\begin{center}
\begin{tikzpicture}
[level distance=16mm,
 every node/.style={font=\footnotesize},
 reducer/.style={inner sep=1pt, font=\footnotesize, rectangle split, rectangle split parts=2,
                 rectangle split empty part height=2ex},
 level 1/.style={sibling distance=40mm},
 level 2/.style={sibling distance=20mm},
 level 3/.style={sibling distance=10mm},
]

\node (root) {$\{1,6\}$}
  child { node {$\{3,8\}$}
    child { node {$\{2,4\}$}
      child { node {$\{3,6\}$}
        edge from parent node[fill=white] {$4$}
      }
      child { node {$\{4,7\}$}
        edge from parent node[fill=white] {$2$}
      }
      edge from parent node[fill=white] {$8$}
    }
    child { node {$\{4,8\}$}
      child { node {$\{2,4\}$}
        edge from parent node[fill=white] {$8$}
      }
      child { node {$\lambda$}
        edge from parent node[fill=white] {$4$}
      }
      edge from parent node[fill=white] {$3$}
    }
    edge from parent node[fill=white] {$1$}
  }
  child { node {$\{4,7\}$}
    child { node {$\{1,7\}$}
      child { node {$\{1,5\}$}
        edge from parent node[fill=white] {$7$}
      }
      child { node {$\{3,8\}$}
        edge from parent node[fill=white] {$1$}
      }
      edge from parent node[fill=white] {$4$}
    }
    child { node {$\{1,5\}$}
      child { node {$\{2,4\}$}
        edge from parent node[fill=white] {$1$}
      }
      child { node {$\{3,8\}$}
        edge from parent node[fill=white] {$5$}
      }
      edge from parent node[fill=white] {$7$}
    }
    edge from parent node[fill=white] {$6$}
  };

\end{tikzpicture}
\end{center}
\caption{Tree representation of $\mathcal{\widehat{F}}$}
\label{fig_3}
\end{figure}

The main benefit of using representatives in the tree form stems from
the fact that their sizes are bounded by a function of only $p$ and $q$
(i.e., maximum size of a representative does not depend on $|\mathcal{F}|$).
It gives a way of storing paths of given length
between two vertices of a graph
in a space-efficient way (see Appendix for details).

The algorithm for finding a cycle of given length has two stages.
In the first stage, vertices of \emph{high degree} are processed to determine
if any of them belongs to a cycle. This is realized using algorithm \texttt{cycleThrough}
from Lemma \ref{cycle_through_v_lemma}. Since there are not many vertices of \emph{high degree},
this can be realized efficiently.

In the second stage, we remove vertices of \emph{high degree} from the graph.
Then, we group all simple paths that are half the cycle length long
by their endpoints and compute representatives for every such set
(see Lemma \ref{representative_creation}).
For each pair of vertices
$(u, v)$, we determine (using \texttt{findDisjoint} from
Lemma \ref{representative_find_empty_intersection_lemma})
if there are two paths: $p$ from $u$ to $v$ and $p'$ from $v$ to $u$, such that
$p$ and $p'$ do not share any internal vertices. If this is the case, $C = p\cup p'$ is
a cycle of required length.

The following representatives-related lemmas are proved in the Appendix.

\begin{lemma} \label{representative_creation}
We can compute a $q$-representative $\widehat{\mathcal{F}}$ for a collection of $p$-sets $\mathcal{F}$,
of size $|\mathcal{\widehat{F}}| \leq \sum_{i=1}^{q}p^i$,
in $O\Big(\big(\sum_{i=1}^{q+1}p^i\big)\cdot scan\big(|\mathcal{F}|\big)\Big)$ I/O's.
\end{lemma}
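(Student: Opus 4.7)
The plan is to construct Monien's tree explicitly in a depth-first manner and then read off its set labels. Starting at the root (which has $E(\mathrm{root})=\emptyset$), I would label each node $v$ by streaming $\mathcal{F}$ through main memory and looking for the first $p$-set $A\in\mathcal{F}$ disjoint from $E(v)$; if I find one, I set $v$'s label to $A$ and, provided $v$'s depth is less than $q$, recurse into $p$ newly created children whose edges to $v$ are labeled by the elements of $A$, updating $E$ on the way down and popping on the way back up. If the scan finishes with no disjoint set found, I label $v$ with $\lambda$ and backtrack. The output $\widehat{\mathcal{F}}$ is the collection of non-$\lambda$ labels, and its size bound $|\widehat{\mathcal{F}}|\le\sum_{i=1}^{q}p^i$ is inherited directly from Monien's analysis \cite{monien_paths}.

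The point of a DFS is that the only ``hot'' state needed during a scan is (i) the current path labels $E(v)$, of cardinality at most $q$, and (ii) the partial tree constructed so far, whose total number of nodes is at most $\sum_{i=0}^{q} p^i$. Both fit comfortably in internal memory (indeed, they are constants for fixed $p$ and $q$), so testing whether a freshly scanned $A$ is disjoint from $E(v)$ costs only $O(pq)$ CPU work per set and no extra I/O, and labeling a single node takes exactly one pass over $\mathcal{F}$, i.e., $O(scan(|\mathcal{F}|))$ I/Os.

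Summing over the tree, the total I/O count is bounded by (number of nodes) $\cdot\, scan(|\mathcal{F}|)$, and the number of tree nodes is at most $\sum_{i=0}^{q} p^i \le \sum_{i=1}^{q+1} p^i$, giving the claimed bound $O\bigl((\sum_{i=1}^{q+1}p^i)\cdot scan(|\mathcal{F}|)\bigr)$.

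The main obstacle is almost entirely one of bookkeeping: one has to argue that the partial tree and the current root-to-$v$ path genuinely fit in internal memory so that no additional block transfers are incurred during a scan of $\mathcal{F}$. For the parameters relevant to the cycle-finding application ($p$ and $q$ tied to the cycle length $c$, hence constant), this is immediate, and the analysis goes through without complication; if one wished to reduce the I/O count further, one could batch all sibling nodes at the current DFS frontier into a single scan of $\mathcal{F}$, but the lemma's bound already accommodates the naive one-scan-per-node cost.
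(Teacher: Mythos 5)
Your proposal is correct and follows essentially the same route as the paper: build Monien's tree one node at a time, labeling each node with a single scan of $\mathcal{F}$ to find a set disjoint from $E(v)$, and multiply the per-node scan cost by the $\sum_{i\le q} p^i$ bound on the number of tree nodes. The only cosmetic difference is that the paper charges $O\big(scan(p|\mathcal{F}|)\big)$ per node (accounting for the $p$ elements of each set) and folds the extra factor of $p$ into the exponent shift from $\sum_{i=1}^{q}p^i$ to $\sum_{i=1}^{q+1}p^i$, whereas you absorb that slack in the node-count inequality; both yield the stated bound.
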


\begin{lemma} \label{representative_find_empty_intersection_lemma}
For a collection of $p$-sets, $\mathcal{F}$,
and a collection of $q$-sets, $\mathcal{G}$,
there is an external-memory method,
\verb\findDisjoint\$(\mathcal{F}, \mathcal{G})$,
that returns a pair of sets $(A,B)$
($A \in \mathcal{F}$, $B \in \mathcal{G}$) s.t. $A \cap B = \emptyset$
or returns $\epsilon$
if there are no such pairs of sets.
\verb\findDisjoint\ uses
$O\Big(\big(\sum_{i=1}^{q+3}p^i + \sum_{i=1}^{p+3}q^i\big)\cdot scan\big(|\mathcal{F}| + |\mathcal{G}|\big)\Big)$
I/O's.
\end{lemma}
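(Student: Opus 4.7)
The plan is to shrink the problem on both sides by appealing to Lemma~\ref{representative_creation}, and then brute-force the resulting tiny instance. First, I would invoke Lemma~\ref{representative_creation} to build a $q$-representative $\widehat{\mathcal{F}}$ of $\mathcal{F}$ at cost $O\big((\sum_{i=1}^{q+1}p^i)\cdot scan(|\mathcal{F}|)\big)$ and, symmetrically, a $p$-representative $\widehat{\mathcal{G}}$ of $\mathcal{G}$ at cost $O\big((\sum_{i=1}^{p+1}q^i)\cdot scan(|\mathcal{G}|)\big)$. Then I would sweep the reduced instance: for each $\widehat{A}\in\widehat{\mathcal{F}}$, scan $\widehat{\mathcal{G}}$ searching for some $\widehat{B}$ with $\widehat{A}\cap\widehat{B}=\emptyset$, returning the first such pair found or $\epsilon$ if the scans all complete unsuccessfully.

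Correctness would follow from two applications of the defining property of representatives. Every $B\in\mathcal{G}$ is a $q$-set, so the $q$-representative property of $\widehat{\mathcal{F}}$ gives that, for a fixed $B$, there is an $A\in\mathcal{F}$ disjoint from $B$ iff there is an $\widehat{A}\in\widehat{\mathcal{F}}$ disjoint from $B$; hence $\mathcal{F}\times\mathcal{G}$ contains a disjoint pair iff $\widehat{\mathcal{F}}\times\mathcal{G}$ does. Since each $\widehat{A}\in\widehat{\mathcal{F}}$ is a $p$-set, the $p$-representative property of $\widehat{\mathcal{G}}$ analogously gives that $\widehat{\mathcal{F}}\times\mathcal{G}$ contains a disjoint pair iff $\widehat{\mathcal{F}}\times\widehat{\mathcal{G}}$ does. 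Since $\widehat{\mathcal{F}}\subseteq\mathcal{F}$ and $\widehat{\mathcal{G}}\subseteq\mathcal{G}$, any pair returned by the sweep is a legitimate answer in $\mathcal{F}\times\mathcal{G}$.

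For the I/O bound I would sum the two representative-construction costs with the cost of the final pair sweep, which is at most $|\widehat{\mathcal{F}}|\cdot scan(|\widehat{\mathcal{G}}|) = O\big((\sum_{i=1}^{q}p^i)\cdot scan(\sum_{i=1}^{p}q^i)\big)$ using the size bounds $|\widehat{\mathcal{F}}|\le\sum_{i=1}^{q}p^i$ and $|\widehat{\mathcal{G}}|\le\sum_{i=1}^{p}q^i$ from Monien's construction. The representative-building costs already fit inside the claimed bound, and the extra powers of $p$ and $q$ living in the stated exponents $q+3$ and $p+3$ (versus $q+1$ and $p+1$) comfortably absorb the pair-sweep cross-term together with any constant-factor overhead from comparing fixed-size sets.

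The step I expect to be the main hurdle is not the I/O bookkeeping but keeping the symmetric representative reduction logically tight: one must apply the $q$-representative property with the query sets ranging over $\mathcal{G}$, and then the $p$-representative property with the query sets ranging over $\widehat{\mathcal{F}}$, chaining the two equivalences in the order $\mathcal{F}\times\mathcal{G}\Leftrightarrow\widehat{\mathcal{F}}\times\mathcal{G}\Leftrightarrow\widehat{\mathcal{F}}\times\widehat{\mathcal{G}}$. Making this chain a genuine biconditional, so that returning $\epsilon$ truly witnesses the absence of a disjoint pair in the original collections, is what forces the reduction to be performed on both sides rather than just one.
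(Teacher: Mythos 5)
Your proposal matches the paper's proof: both reduce $\mathcal{F}$ to a $q$-representative and $\mathcal{G}$ to a $p$-representative via Lemma~\ref{representative_creation} and then test the small reduced instance for a disjoint pair, with correctness resting on the same chain of representative equivalences (the paper defers this to Lemma~3.2 of Alon et al.). The only divergence is the final check---the paper queries the tree $\widehat{\mathcal{G}}$ via \texttt{repQuery} at $O(pq)$ I/O's per set of $\widehat{\mathcal{F}}$ rather than rescanning $\widehat{\mathcal{G}}$ wholesale---but since $|\widehat{\mathcal{G}}|\le|\mathcal{G}|$ your brute-force sweep still fits within the stated bound (taking $p\ge q$ w.l.o.g., as the paper does), so this is an implementation detail rather than a gap.
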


\begin{lemma} \label{cycle_through_v_lemma}
Let $G=(V,E)$. A cycle of length exactly $k$ that passes through arbitrary
$v \in V$, if it exists, can be found by an external-memory algorithm \verb\cycleThrough\$(G, k, v)$
in $O\big((k-1)!\cdot scan(m)\big)$ I/O's, where $m = |E|$, via the use of representatives.
\end{lemma}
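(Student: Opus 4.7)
The plan is to adapt Monien's representative-based simple-path method to the external-memory model. For each vertex $u$ and each $\ell = 1, \ldots, k-1$, I would maintain a family $\mathcal{F}_u^{\ell}$ of $\ell$-sets: each member is the vertex set (excluding $v$ but including $u$) of a simple length-$\ell$ path from $v$ to $u$, and $\mathcal{F}_u^{\ell}$ is kept as a $(k-1-\ell)$-representative of the full, possibly exponentially large, collection. A cycle of length $k$ through $v$ then corresponds to a nonempty $\mathcal{F}_u^{k-1}$ at a neighbor $u$ of $v$, closed by the edge $(u,v)$.

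Concretely, I would first initialize $\mathcal{F}_u^{1} \leftarrow \{\{u\}\}$ for every neighbor $u$ of $v$ by scanning $v$'s adjacency list. Then, for $\ell = 1, \ldots, k-2$, I would scan $E$ once and, for each edge $(u,w)$ with $w \neq v$ and each $A \in \mathcal{F}_u^{\ell}$ with $w \notin A$, emit the pair $(w, A \cup \{w\})$; sort the emitted pairs by endpoint; and in a single pipelined pass apply Lemma~\ref{representative_creation} at each endpoint $w$ to reduce the collected candidates to a $(k-2-\ell)$-representative $\mathcal{F}_w^{\ell+1}$. Finally, I would scan $v$'s adjacency list and output a cycle from any witness in $\mathcal{F}_u^{k-1}$ for $u$ adjacent to $v$.

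Correctness will follow by induction on $\ell$: given a $(k-2-\ell)$-set $B$ avoided by the non-$v$ vertices of some length-$(\ell{+}1)$ path $v, v_1, \ldots, v_{\ell}, w$, the length-$\ell$ prefix avoids the $(k-1-\ell)$-set $B \cup \{w\}$, so by the inductive representative property some $A \in \mathcal{F}_{v_{\ell}}^{\ell}$ avoids $B \cup \{w\}$; the candidate $A \cup \{w\}$ is then emitted during extension and, being disjoint from $B$, survives the representative reduction by the defining property of representatives. Consequently, any length-$k$ cycle through $v$ is witnessed at the end.

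For the I/O bound, Monien's construction bounds $|\mathcal{F}_u^{\ell}|$ by a function of $k$ alone, so each extension scan emits $O(m)$ candidates up to a $k$-dependent multiplicative factor, and the per-vertex representative reductions share I/O through the common sort rather than being invoked $n$ times independently. The cost at level $\ell$ should work out to $O(\alpha_\ell \cdot scan(m))$ for a $k$-dependent coefficient $\alpha_\ell$ bounding both the representative size and the prefactor from Lemma~\ref{representative_creation}; summing over $\ell$, the aggregate coefficient collapses to $O((k-1)!)$, giving the claimed bound. The hardest part will be precisely this accounting: verifying that the per-vertex reductions can be pipelined through a single sort so no stray $n/B$ term appears, and that the composed $k$-dependent factors telescope to at most $(k-1)!$ rather than something larger such as $k^{k}$.
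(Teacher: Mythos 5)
Your high-level plan coincides with the paper's: a dynamic program over path lengths $\ell = 1,\ldots,k-1$ that maintains, for each vertex $u$, a $(k-1-\ell)$-representative of the family of vertex sets of simple length-$\ell$ paths between $v$ and $u$, closing the cycle through an edge back to $v$ at the end. (The paper grows the paths from the other endpoint toward $v$, which is immaterial.) Your correctness argument is also sound: the collection of all one-edge extensions of the current representatives is itself a $(k-2-\ell)$-representative of the full path family, and a representative of a representative is a representative.

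The gap is exactly where you flagged it: the accounting, and it is real. The paper never materializes the extended candidate collection and then compresses it with Lemma~\ref{representative_creation}; it builds each new representative tree directly, node by node, deciding each label by scanning $\Gamma(u)$ and issuing a \texttt{repQuery} (an $O(pq)$ root-to-leaf walk) against the \emph{old} trees. Its per-level factor is therefore (new tree size) times $O(pq)$, and the resulting sum $\sum_{p}(p+1)^{k-2-p}$ collapses to $(k-2)!$. Your route instead pays (new tree size) times (candidate collection size per edge), because Lemma~\ref{representative_creation} rescans its entire input collection once for every node it labels, and your candidate collection already carries a factor of $|\mathcal{F}_u^{\ell}|\approx \ell^{\,k-1-\ell}$ per edge. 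The combined factor $(\ell+1)^{k-1-\ell}\cdot \ell^{\,k-1-\ell}$, maximized over $\ell$, is roughly the \emph{square} of the single representative-size factor and does not telescope to $(k-1)!$. Separately, regrouping the emitted pairs $(w, A\cup\{w\})$ by endpoint $w$ is a genuine re-keying (candidates are produced in source order, consumed in destination order) and so costs $sort(\cdot)$, not $scan(\cdot)$, of $\Theta(m)$ items times $k$-dependent factors at every level; your bound thus degrades from $(k-1)!\cdot scan(m)$ to a larger coefficient times $sort(m)$. Recovering the stated lemma requires the paper's in-place construction, which queries the old representatives rather than expanding them.
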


Before we present our result for naturally sparse graphs, we first give
an external-memory method for general graphs.

\begin{theorem} Let $G=(V,E)$ be a directed or an undirected graph. There is an external-memory
algorithm that decides if $G$ contains a cycle of length exactly $c \in \{2k-1, 2k\}$,
and finds such cycle if it exists, that takes
$O\big(k\cdot sort(m^{2-\frac{1}{k}}) + (2k-1)!\cdot scan(m^{2-\frac{1}{k}})\big)$ I/O's.
\end{theorem}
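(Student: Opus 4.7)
The plan is to adapt the internal-memory algorithm of Alon \emph{et al.}~\cite{short_cycles}, partitioning the vertices of $G$ by degree with threshold $\Delta = m^{1/k}$. Call a vertex \emph{heavy} if its degree is at least $\Delta$, and \emph{light} otherwise; since $\sum_v \deg(v) = 2m$, there are at most $O(m^{1-1/k})$ heavy vertices. The algorithm then has two phases: first, search for a $c$-cycle through any heavy vertex; second, after removing all heavy vertices, search for a $c$-cycle entirely inside the remaining (low-degree) graph $G'$, whose maximum degree is strictly below $\Delta$.

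In the first phase I iterate over the heavy vertices and call \texttt{cycleThrough}$(G,c,v)$ from Lemma~\ref{cycle_through_v_lemma}, which costs $O((c-1)! \cdot scan(m)) = O((2k-1)! \cdot scan(m))$ I/Os per vertex, so the total first-phase cost is $O((2k-1)! \cdot m^{1-1/k} \cdot scan(m)) = O((2k-1)! \cdot scan(m^{2-1/k}))$, matching the second claimed term. For the second phase, any remaining $c$-cycle lies in $G'$ and decomposes into two internally vertex-disjoint simple paths sharing the same pair of endpoints, of lengths $a = \lfloor c/2 \rfloor$ and $b = \lceil c/2 \rceil$, both at most $k$. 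To generate such paths, I enumerate all simple paths in $G'$ of length at most $k$ by starting with the edge set and performing $k-1$ sort-based extension rounds, where each round joins the current path list with the adjacency lists of $G'$ and discards extensions that repeat a vertex (detected by sorting each extended path's vertex list and scanning for duplicates). Because every vertex of $G'$ has degree less than $\Delta = m^{1/k}$, the number of simple paths of length $\leq k$ produced in any round is at most $2m \cdot \Delta^{k-1} = O(m^{2-1/k})$, so each round costs $O(sort(m^{2-1/k}))$ and the enumeration costs $O(k \cdot sort(m^{2-1/k}))$ I/Os in total, supplying the first claimed term.

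To complete the search, I sort the enumerated paths by endpoint pair $(u,v)$ and, within each group, treat each $a$-path and $b$-path as an $(a{-}1)$-set or $(b{-}1)$-set of internal vertices, then invoke \texttt{findDisjoint} (Lemma~\ref{representative_find_empty_intersection_lemma}) to detect internally disjoint $(a,b)$-path pairs, which are exactly the $c$-cycles through $(u,v)$. The main obstacle is the I/O accounting of this last step: the per-group overhead from Lemma~\ref{representative_find_empty_intersection_lemma} depends only on $a,b \leq k$, but the calls must be amortized across all endpoint pairs so that their aggregate cost is $O((2k-1)! \cdot scan(m^{2-1/k}))$ rather than a separate $scan$ per pair. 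This is achieved by streaming through the endpoint-sorted list of paths, processing one group at a time, so that the total scanning across all groups equals one $scan$ over the $O(m^{2-1/k})$ collected paths, multiplied by a factor of the form $k^{O(k)}$ that is absorbed into $(2k-1)!$. Combining the three bounds yields the claimed $O(k \cdot sort(m^{2-1/k}) + (2k-1)! \cdot scan(m^{2-1/k}))$ I/O complexity.
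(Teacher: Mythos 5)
Your proposal follows essentially the same route as the paper: the same degree threshold $\Delta = m^{1/k}$, the same first phase calling \texttt{cycleThrough} on the $O(m^{1-1/k})$ heavy vertices, the same sort-based generation of the $O(m^{2-1/k})$ short paths in the low-degree remainder, and the same endpoint-grouped \texttt{findDisjoint} calls with the per-group $k^{O(k)}$ overhead amortized into $(2k-1)!\cdot scan(m^{2-1/k})$. The only cosmetic difference is that you treat $c=2k-1$ and $c=2k$ uniformly via paths of lengths $\lfloor c/2\rfloor$ and $\lceil c/2\rceil$, where the paper presents $c=2k$ and calls the odd case analogous.
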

\begin{proof}

Algorithm \ref{alg_general_cycles} handles the case of general graphs
(which are not necessarily \emph{naturally sparse}), and cycles
of length $c=2k$ (the case of $c=2k-1$ is analogous).

\begin{algorithm}
\caption{Short cycles in general graphs}
\label{alg_general_cycles}
\begin{algorithmic}[1]
\STATE $\Delta \leftarrow m^\frac{1}{k}$
\FORALL{$v$ -- vertex of degree $\geq \Delta$}
  \STATE $C \leftarrow \texttt{cycleThrough}(G, 2k, v)$
  \IF{$C \neq \epsilon$}
    \RETURN $C$
  \ENDIF{}
\ENDFOR
\STATE remove vertices of degree $\geq \Delta$ from $G$
\STATE generate all directed paths of length $k$ in $G$
\STATE sort the paths lexicographically, according to their endpoints
\STATE group all paths $u \leadsto v$ into collection of $(k-1)$-sets $\mathcal{F}_{uv}$
\FORALL{pairs $(\mathcal{F}_{uv}, \mathcal{F}_{vu})$}
  \STATE $P \leftarrow \texttt{findDisjoint}(\mathcal{F}_{uv}, \mathcal{F}_{vu})$
  \IF{$P = (A, B)$}
    \RETURN $C = A \cup B$
  \ENDIF
\ENDFOR
\RETURN $\epsilon$
\end{algorithmic}
\end{algorithm}

Since there are at most $m/\Delta = m^{1-\frac{1}{k}}$ vertices of degree at least $\Delta$, and
each call to \texttt{cycleThrough} requires
$O\big((2k-1)!\cdot scan(m)\big)$ I/O's (by Lemma \ref{cycle_through_v_lemma}),
the first \texttt{for} loop (lines 2-7) takes $O\big(m^{1 - \frac{1}{k}}\cdot(2k-1)!\cdot scan(m)\big) =
O\big((2k-1)!\cdot scan(m^{2-\frac{1}{k}})\big)$ I/O's.

Removing vertices of high degree in line 8 is realized just like line 5
of Algorithm \ref{alg_degorder}, in $O\big(sort(m)\big)$ I/O's.
There are at most $m\Delta^{k-1} = m^{2-\frac{1}{k}}$ paths to be generated in line 9.
It can be done in $O\big(k\cdot sort(m^{2-\frac{1}{k}})\big)$ I/O's (see Appendix).
Sorting the paths (line 10) takes $O\big(sort(m^{2-\frac{1}{k}})\big)$ I/O's.
After that, creating $\mathcal{F}_{uv}$'s (line 11) requires $O\big(scan(m^{2-\frac{1}{k}})\big)$ I/O's.

The \texttt{groupF} procedure groups $\mathcal{F}_{uv}$ and $\mathcal{F}_{vu}$ together.
Assume we store $\mathcal{F}_{uv}$'s as tuples $(u, v, S)$, for $S \in \mathcal{F}_{uv}$, in a list $F$.
By $u\prec v$ we denote that $u$ precedes $v$ in an arbitrary ordering of $V$.
For $u\prec v$, tuples $(u, v, 1, S)$ from line 3 mean that $S \in \mathcal{F}_{uv}$,
while tuples $(u, v, 2, S)$ from line 5 mean that $S \in \mathcal{F}_{vu}$.
The \texttt{for} loop (lines 1-7) clearly takes $O\big(scan(m^{2-\frac{1}{k}})\big)$ I/O's.
After sorting $F$ (line 8) in $O\big(sort(m^{2-\frac{1}{k}})\big)$ I/O's, tuples for sets from
$\mathcal{F}_{uv}$ directly precede those for sets from $\mathcal{F}_{vu}$, allowing us to execute
line 9 in $O\big(scan(m^{2-\frac{1}{k}})\big)$ I/O's.

\vspace{\baselineskip}
\noindent
\fbox{
  \parbox{\linewidth}{
    \begin{algorithmic}[1]
    \REQUIRE \texttt{groupF}
    \FORALL{$(u, v, S)$ in $F$}
      \IF{$u\prec v$}
        \STATE write $(u, v, 1, S)$ back to $F$
      \ELSE
        \STATE write $(v, u, 2, S)$ back to $F$
      \ENDIF
    \ENDFOR
    \STATE sort $F$ lexicographically
    \STATE scan $F$ to determine pairs $(\mathcal{F}_{uv}, \mathcal{F}_{uv})$
    \end{algorithmic}
  }
}
\vspace{\baselineskip}

Based on Lemma \ref{representative_find_empty_intersection_lemma},
the total number of I/O's in calls to \texttt{findDisjoint}
in Algorithm \ref{alg_general_cycles}, line 13 is
\begin{eqnarray*}
& O\Big(\sum_{u,v}\big(\sum_{i=1}^{k+2}(k-1)^i\cdot scan(|\mathcal{F}_{uv}| + |\mathcal{F}_{vu}|)\big)\Big) & \\
& = O\Big(\big(\sum_{i=1}^{k+2}(k-1)^i\big)\cdot\sum_{u,v}scan\big(|\mathcal{F}_{uv}| + |\mathcal{F}_{vu}|\big)\Big) & \\
& = O\big((2k-1)!\cdot scan(m^{2-\frac{1}{k}})\big) &
\end{eqnarray*}
as we set $p=q=k-1$ and
$\sum_{i=1}^{k+2}(k-1)^i = O\big((k-1)^{k+3}\big) = O\big((2k-1)!\big)$.

Putting it all together, we get that Algorithm \ref{alg_general_cycles} runs in
$O\big(sort(m^{2-\frac{1}{k}}) + (2k-1)!\cdot scan(m^{2-\frac{1}{k}})\big)$ total I/O's.
\qed
\end{proof}

\begin{theorem}
Let $G=(V,E)$ be a directed or an undirected graph.
There is an external-memory algorithm
that, given $L$ -- a $\delta$-degeneracy ordering of $G$
(for $\delta = (2+\epsilon)d$), finds
a cycle of length exactly $c$, or concludes that it does not exist:
\begin{list}{\labelitemi}{\leftmargin=1em}
\item[(i)] in $O\Big(\delta^{1-\frac{1}{k}}\cdot \big(k\cdot sort(m^{2-\frac{1}{k}})+(4k)!\cdot scan(m^{2-\frac{1}{k}})\big)\Big)$
I/O's if $c = 4k-2$
\item[(ii)] in $O\Big(\delta\cdot \big(k\cdot sort(m^{2-\frac{1}{k}})+(4k)!\cdot scan(m^{2-\frac{1}{k}})\big)\Big)$
I/O's if $c = 4k-1$ or $c = 4k$
\item[(iii)] in $O\Big(\delta^{1+\frac{1}{k}}\cdot \big(k\cdot sort(m^{2-\frac{1}{k}})+(4k)!\cdot scan(m^{2-\frac{1}{k}})\big)\Big)$
I/O's if $c = 4k+1$
\end{list}
\end{theorem}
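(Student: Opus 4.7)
The plan is to mirror the three-phase skeleton of Algorithm \ref{alg_general_cycles}---apply \texttt{cycleThrough} to each high-degree vertex, delete these vertices, and then enumerate short paths in the residual graph and stitch internally-disjoint pairs with \texttt{findDisjoint}---while exploiting the ordering $L$ to cut down the number of length-$\ell$ paths we need to generate. As a preprocessing step I would orient every edge of $G$ from its lower-ranked to its higher-ranked endpoint in $L$, an $O(sort(m))$-I/O operation that yields an acyclic orientation in which every vertex has out-degree at most $\delta$. This orientation is the only new ingredient beyond the previous theorem.

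For each $c \in \{4k-2, 4k-1, 4k, 4k+1\}$ I would decompose the sought cycle into two internally-disjoint paths whose lengths sum to $c$, namely $(2k-1,2k-1)$, $(2k-1,2k)$, $(2k,2k)$, and $(2k,2k+1)$ respectively, just as in the Alon et al.~argument underlying the preceding theorem. The grouping-and-\texttt{findDisjoint} machinery of Algorithm \ref{alg_general_cycles} then carries over unchanged. The novelty is in how the two path families are enumerated: rather than generating every length-$\ell$ path in the residual graph by blind BFS (which in a max-degree-$\Delta$ graph costs $m\Delta^{\ell-1}$), I would enumerate paths by the position of their topmost vertex in $L$ and by the oriented descending sequences emanating from that peak. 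Because out-degree is bounded by $\delta$, the pieces above each peak contribute at most $\delta^\ell$ factors rather than $\Delta^\ell$ factors, and standard accounting (as in Alon et al.) produces a bound $X = O(\delta^j\cdot m^{2-1/k})$ with $j \in \{-1/k, 0, 1/k\}$ matching the three announced complexity factors once the high-degree threshold is set to $\Delta = m^{1/k}$.

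With these path counts, the rest of the analysis parallels the preceding theorem. The high-degree phase costs $O((c-1)!\cdot scan(m^{2-1/k}))$. The low-degree phase enumerates the two path families by successive sort-and-join rounds for $O(k\cdot sort(X))$ I/O's, groups paths by ordered endpoint pairs in $O(sort(X))$, and calls \texttt{findDisjoint} from Lemma \ref{representative_find_empty_intersection_lemma} with $p, q \leq 2k$, incurring $O((4k)!\cdot scan(X))$ representative cost. Substituting the three values of $X$ and simplifying yields the announced outer factors $\delta^{1-1/k}$, $\delta$, and $\delta^{1+1/k}$ in front of the common bracket $k\cdot sort(m^{2-1/k})+(4k)!\cdot scan(m^{2-1/k})$.

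The main obstacle I expect is the peak-based path enumeration in external memory: we need each length-$\ell$ path to be produced with bounded multiplicity, tagged with its two endpoints, and stored compactly enough that Lemma \ref{representative_find_empty_intersection_lemma} applies directly, all within $O(k\cdot sort(X))$ I/O's. The correctness of the $\delta^j$ count depends on a careful invertible labeling by peak position, and the external-memory implementation must interleave extensions on the two sides of the peak without ever materializing an intermediate table larger than $X$ and without double-counting paths whose canonical peak coincides with an endpoint. Once this subroutine is in place, the remainder of the proof reduces to the same arithmetic used in the preceding theorem.
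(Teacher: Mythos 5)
Your overall skeleton (high-degree phase via \texttt{cycleThrough}, deletion of high-degree vertices, restricted path enumeration, grouping, \texttt{findDisjoint}) matches the paper's, but the two quantitative claims that carry the theorem do not hold as you state them. First, the peak-based enumeration is backwards. Orienting each edge from its lower-ranked to its higher-ranked endpoint of $L$ bounds the \emph{out}-degree (number of higher neighbors) by $\delta$, so the $\delta$-per-step saving is available only when a path is walked \emph{upward} in $L$; the ``descending sequences emanating from the peak'' you propose to enumerate follow in-edges, of which a vertex of the residual graph may have up to $\Delta$, and moreover the two halves of a path on either side of its topmost vertex are not monotone in $L$ (only the two edges incident to the top vertex are guaranteed to descend), so they cannot be encoded as descending sequences at all. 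The usable bound --- the one from Alon et al.\ that the paper's appendix actually implements --- is that a length-$\ell$ path, read from its better end, has at most $\lfloor \ell/2\rfloor$ edges oriented against $L$, giving $O(2^{\ell}\, m\,\Delta^{\lfloor \ell/2\rfloor}\delta^{\lceil \ell/2\rceil-1})$ paths, together with the further restriction to paths that begin with one or two backward-oriented edges.

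Second, and decisively, the threshold $\Delta = m^{1/k}$ cannot work. With that threshold the residual graph has maximum degree up to $m^{1/k}$, and even the restricted family of length-$(2k+1)$ paths starting with two backward edges has size $O(\delta^{2}\cdot m\,\Delta^{k-1}\delta^{k-1}) = O(m^{2-1/k}\delta^{k+1})$, exceeding your target $O(\delta^{1+1/k}m^{2-1/k})$ by a factor of $\delta^{k-1/k}$; no bookkeeping recovers the announced bound from there. The paper instead sets $\Delta = m^{1/k}/\delta^{1+1/k}$ for $c=4k+1$ (and analogously for the other cases), which simultaneously (a) brings the restricted path count down to $O(m\,\Delta^{k-1}\delta^{k+1}) = O(m^{2-1/k}\delta^{1+1/k})$ and (b) raises the number of high-degree vertices to $m/\Delta = m^{1-1/k}\delta^{1+1/k}$, which is precisely where the outer factor $\delta^{1+1/k}$ multiplying the $scan$ term comes from. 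The fact that your high-degree phase costs only $O((c-1)!\cdot scan(m^{2-1/k}))$ with no $\delta$ factor is itself a signal that this balancing has been lost. (Your exponents $j\in\{-1/k,0,1/k\}$ are also inconsistent with the announced factors $\delta^{1-1/k},\delta,\delta^{1+1/k}$, but that is secondary to the threshold issue.)
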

\begin{proof}
We describe the algorithm for the case of directed $G$, with $c = 4k+1$, as
other cases are similar (and a little easier).
We assume that $\delta < m^\frac{1}{2k+1}$, which is obviously the case
for \emph{naturally sparse} graphs.
Otherwise, running Algorithm \ref{alg_general_cycles} on $G$
achieves the advertised complexity.

\begin{algorithm}
\caption{Short cycles in degenerate graphs}

\label{alg_degenerate_cycles}
\begin{algorithmic}[1]
\STATE $\Delta \leftarrow m^\frac{1}{k}/\delta^{1+\frac{1}{k}}$
\FORALL{$v$ -- vertex of degree $\geq \Delta$}
  \STATE $C \leftarrow \texttt{cycleThrough}(G, 4k+1, v)$
  \IF{$C \neq \epsilon$}
    \RETURN $C$
  \ENDIF
\ENDFOR
\STATE remove vertices of degree $\geq \Delta$ from $G$
\STATE generate directed paths of length $2k$ and $2k+1$ in $G$
\STATE sort the paths lexicographically, according to their endpoints
\STATE group all paths $u \leadsto v$ of length $2k$ into collection of $(2k-1)$-sets $\mathcal{F}_{uv}$
\STATE group all paths $u \leadsto v$ of length $2k+1$ into collection of $(2k)$-sets $\mathcal{G}_{uv}$
\FORALL{pairs $(\mathcal{F}_{uv}, \mathcal{G}_{uv})$}
  \STATE $P \leftarrow \texttt{findDisjoint}(\mathcal{F}_{uv}, \mathcal{G}_{vu})$
  \IF{$P = (A, B)$}
    \RETURN $C = A \cup B$
  \ENDIF
\ENDFOR
\RETURN $\epsilon$
\end{algorithmic}
\end{algorithm}

Algorithm \ref{alg_degenerate_cycles} is remarkably similar to Algorithm \ref{alg_general_cycles} and
so is its analysis.
Differences lie in the value of $\Delta$ and in line 9, when only \emph{some} paths of length $2k$ and $2k+1$
are generated.
As explained in \cite{short_cycles}, it suffices to only consider all $(2k+1)$-paths that start with
two backward-oriented (in $L$) edges and all $2k$-paths that start with a backward-oriented (in $L$) edge.
The number of these paths is $O(m^{2-\frac{1}{k}}\delta^{1+\frac{1}{k}})$. Since we can generate them
in $O\left(k\delta^{1+\frac{1}{k}}\cdot sort(m^{2-\frac{1}{k}})\right)$ I/O's (see Appendix),
and there are at most $O(m^{1-\frac{1}{k}}\delta^{1+\frac{1}{k}})$ vertices in $G$ of degree $\geq \Delta$,
the theorem follows. \qed
\end{proof}

\section{All Maximal Cliques} \label{max_cliques_section}

The Bron-Kerbosch algorithm \cite{bron73} is often the choice when one needs to list
all maximal cliques of an undirected graph $G=(V,E)$.
It was initially improved by Tomita {\it et al.}~\cite{tomita2006}.
We present this improvement as the \texttt{BronKerboschPivot} procedure
($\Gamma(v)$ denotes the set of neighbors of vertex $v$).

\vspace{\baselineskip}
\noindent
\fbox{
  \parbox{\linewidth}{
    \begin{algorithmic}[1]
    \REQUIRE \texttt{BronKerboschPivot($P$, $R$, $X$)}
    \IF{$P \cup X = \emptyset$}
      \PRINT $R\qquad$ \emph{//maximal clique}
    \ENDIF
    \STATE $u \leftarrow$ vertex from $P\cup X$ that maximizes $|P\cap\Gamma(u)|$
    \FORALL{$v \in P\setminus \Gamma(v)$}
      \STATE \texttt{BronKerboschPivot($P\cap\Gamma(v)$, $R\cup\{v\}$, $X\cap\Gamma(v)$)}
      \STATE $P \leftarrow P\setminus\{v\}$
      \STATE $X \leftarrow X\cup\{v\}$
    \ENDFOR
    \end{algorithmic}
  }
}
\vspace{\baselineskip}

\noindent
The meaning of the arguments to \texttt{BronKerboschPivot}:
$R$ is a (possibly non-maximal) clique, $P$ and $X$ are a division of the set of vertices that are neighbors
of all vertices in $R$, s.t. vertices in $P$ are to be considered for adding to $R$ while vertices in $X$
are restricted from the inclusion.

Whereas Tomita {\it et al.} run the algorithm as
\texttt{BronKerboschPivot($V$,$\emptyset$,$\emptyset$)},
Eppstein {\it et al.}~\cite{els2010} improved it even further for the case of a $d$-degenerate $G$
by utilizing its $d$-degeneracy ordering $L = \{v_1, v_2, \ldots, v_n\}$
and by performing $n$ independent calls to \texttt{BronKerboschPivot}.
Algorithm \ref{alg_max_cliques} presents their version. It runs in time $O(dn3^{d/3})$ in the RAM model.

\begin{algorithm}
\caption{Maximal cliques in degenerate graph}

\label{alg_max_cliques}
\begin{algorithmic}[1]
\FOR{$i\leftarrow 1\ldots n$}
  \STATE $P\leftarrow \Gamma(v_i)\cap \{v_j\colon j > i\}$
  \STATE $X\leftarrow \Gamma(v_i)\cap \{v_j\colon j < i\}$
  \STATE \texttt{BronKerboschPivot($P$,$\{v_i\}$,$X$)}
\ENDFOR
\end{algorithmic}
\end{algorithm}

The idea behind Algorithm \ref{alg_max_cliques} is to limit the depth of
recursive calls to $|P| \leq d$ and then apply the analysis
of Tomita {\it et al.} \cite{tomita2006}.

We show how to efficiently implement Algorithm \ref{alg_max_cliques} in the
external memory model using a $(2+\epsilon)d$-degeneracy ordering of $G$.
Following \cite{els2010}, we define subgraphs $H_{P,X}$ of $G$.

\begin{definition}[Graphs $H_{P,X}$]
Subgraph $H_{P,X}=(V_{P,X},E_{P,X})$ of $G=(V,E)$ is defined as follows:
\begin{eqnarray*}
V_{P,X} & = & P\cup X \\
E_{P,X} & = & \{(u,v)\colon (u,v)\in E\land (u\in P \lor v\in P)\}
\end{eqnarray*}
\end{definition}

That is, $H_{P,X}$ contains all edges in $G$ whose endpoints are from $P\cup X$, and
at least one of them lies in $P$.
To ensure efficiency, $H_{P,X}$ is passed as an additional argument
to every call to \texttt{BronKerboschPivot} with $P$ and $X$.
It is used in determining $u$ at line 4 of \texttt{BronKerboschPivot}
(we simply choose a vertex of highest degree in $H_{P,X}$).

The following two lemmas regarding construction of $H_{P,X}$'s are proved in the Appendix.

\begin{lemma} \label{alg_max_cliques_lemma_1}
Given a $\delta$-degeneracy ordering $L$ of an undirected $d$-degenerate graph $G$ ($\delta = (2+\epsilon)d$),
all initial sets $P$, $X$, and graphs $H_{P,X}$ that are passed to \verb\BronKerboschPivot\
in line 4 of Algorithm \ref{alg_max_cliques} can be generated in $O(sort(\delta^2n))$ I/O's.
\end{lemma}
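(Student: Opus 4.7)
The plan is to compute the initial sets $P_i, X_i$ by a few sorts and scans of the edge list, and then to materialize each edge set $E_{P_i,X_i}$ by enumerating a bounded list of candidate triples and filtering against the edge list of $G$.

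First I would attach to each endpoint of each edge of $G$ its position in $L$ via two sort--join passes; then for every edge $\{u,v\}$ of $G$ with $i(u) < i(v)$ I emit the two records $(i(u), v, P)$ and $(i(v), u, X)$ and sort the resulting $2m$ records lexicographically by the first field. Grouping by $i$ immediately yields $P_i$ and $X_i$ of lines 2 and 3 of Algorithm \ref{alg_max_cliques}. The cost so far is $O(sort(m)) = O(sort(\delta n))$.

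Second, I would observe that any edge $(u,w) \in E_{P_i,X_i}$ forces $\{v_i, u, w\}$ to be a triangle of $G$ in which $v_i$ is not the largest vertex in $L$, so $\sum_i |E_{P_i,X_i}|$ is at most twice the number of triangles of $G$. Since $L$ is a $\delta$-degeneracy ordering, the triangle count is at most $\sum_a \binom{|P_a|}{2} = O(\delta^2 n)$, giving a target output size of $O(\delta^2 n)$.

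To actually enumerate $E_{P_i,X_i}$, I would scan the sorted $P_i, X_i$ list and, for each $v_i$, load $P_i$ into main memory (feasible as $|P_i| \le \delta$, which is at most $M$ under the usual assumption) and, for every $w \in \Gamma(v_i) = P_i \cup X_i$ and every $u \in P_i \setminus \{w\}$, emit a candidate triple $(u,w,v_i)$. The total number of candidates is $\sum_i |P_i|(|\Gamma(v_i)|-1) \le 2\delta m = O(\delta^2 n)$. To keep only candidates whose $(u,w)$ is actually an edge, I would splice the edges of $G$ into the candidate stream (each $\{a,b\}\in E$ appearing in both orientations with a distinguishing tag), sort the combined list lexicographically by $(u,w)$ at cost $O(sort(\delta^2 n))$, and scan once to retain only those triples certified by a matching edge-tag. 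A final sort by $v_i$ re-groups the edges of each $H_{P_i,X_i}$ together, again at cost $O(sort(\delta^2 n))$.

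The main obstacle is justifying the $O(\delta^2 n)$ candidate bound; once this degeneracy-based counting is in hand, every remaining step is a routine external-memory sort or scan, and summing gives the claimed $O(sort(\delta^2 n))$ bound.
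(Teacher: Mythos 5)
Your proposal is correct and follows essentially the same route as the paper's: the initial $P_i$ and $X_i$ are produced by tagging each edge according to the relative order of its endpoints in $L$, sorting, and grouping, and the graphs $H_{P_i,X_i}$ are built by enumerating $O(\delta^2 n)$ candidate vertex pairs and filtering them against the edge list of $G$ with one further sort and scan. The only cosmetic difference is that you enumerate candidates from the center vertex $v_i$ (pairing $P_i$ against $\Gamma(v_i)$) and bound their number via a triangle count, whereas the paper enumerates, for each vertex $u$, pairs of its at most $\delta$ later neighbors, charging $O\big(\binom{\delta}{2}\big)$ candidates to $u$; both give the same $O(sort(\delta^2 n))$ bound.
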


\begin{lemma} \label{alg_max_cliques_lemma_2}
Given a $\delta$-degeneracy ordering $L$ of an undirected $d$-degenerate graph $G$ ($\delta = (2+\epsilon)d$),
in a call to \verb\BronKerboschPivot\ that was given $H_{P,X}$, with $|P| = p$ and
$|X| = x$, all graphs $H_{P\cap\Gamma(v),X\cap\Gamma(v)}$ that have to be passed to recursive
calls in line 6, can be formed in $O(sort(\delta p^2(p+x)))$ I/O's.
\end{lemma}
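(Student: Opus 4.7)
The plan is to construct all $\le p$ recursive-call graphs $H_v := H_{P\cap\Gamma(v),X\cap\Gamma(v)}$ simultaneously by an external-memory sort/join pipeline on the edge list of $H_{P,X}$. The underlying observation is that $H_{P,X}$ already contains every edge that could appear in any $H_v$: if $(a,b)\in E(H_v)$, then at least one endpoint lies in $P\cap\Gamma(v)\subseteq P$ and both lie in $P\cup X$, so $(a,b)\in E(H_{P,X})$. Moreover, since $H_{P,X}$ is a subgraph of the $d$-degenerate graph $G$ on $p+x$ vertices, $|E(H_{P,X})|\le d(p+x)\le \delta(p+x)$.

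First, I would extract the restricted neighborhoods $N_v := \Gamma(v)\cap(P\cup X)$ for every $v\in P$. Because $v\in P$, every edge of $H_{P,X}$ incident to $v$ contributes its other endpoint to $N_v$, so a single sort of the edge list by the $P$-endpoint followed by a scan produces the list $L:=\{(v,c):v\in P,\ c\in N_v\}$ of total size $O(\delta(p+x))$ in $O(sort(\delta(p+x)))$ I/O's; the vertex sets $V(H_v)=N_v$ are then immediately available by grouping $L$. For the edge sets, note that $(a,b)\in E(H_{P,X})$ lies in $E(H_v)$ iff both $a\in N_v$ and $b\in N_v$. To test this uniformly over all $v\in P$, I replicate each edge record once per $v$ via an external-memory cross-product with the small list $P$, producing at most $p\cdot\delta(p+x)$ candidate triples $(v,a,b)$; two successive sort-merge joins against $L$ on the keys $(v,a)$ and then $(v,b)$ retain only the triples for which both endpoints lie in $N_v$. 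The surviving triples, partitioned by $v$, are exactly the edge lists of the $H_v$'s.

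Each of the two joins sorts $O(\delta p(p+x))$ records, so the total I/O cost is $O(sort(\delta p(p+x)))$, which is comfortably within the claimed bound $O(sort(\delta p^2(p+x)))$. The main technical obstacle is arranging the replication and two-sided filtering as a single pair of global sort-merge passes rather than $p$ independent passes over $H_{P,X}$, so that the sorting work amortizes across the $p$ recursive calls instead of being paid once per $v\in P$; this is what keeps the construction cost proportional to the total output size $\sum_v \bigl(|V(H_v)|+|E(H_v)|\bigr)$ rather than to $p\cdot |H_{P,X}|$ worst-case. Correctness follows directly from the ``both endpoints in $N_v$'' characterization established above.
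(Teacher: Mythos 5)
There is a genuine gap, on two related counts, both stemming from not tracking membership in $P$ carefully enough.

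First, your edge-membership test is not the definition of $H_{P,X}$. You keep an edge $(a,b)$ of $H_{P,X}$ in $H_v$ iff both $a,b\in N_v=\Gamma(v)\cap(P\cup X)$, but by the paper's definition $E_{P',X'}$ requires in addition that at least one endpoint lie in $P'=P\cap\Gamma(v)$. Your construction therefore also retains edges with both endpoints in $X\cap\Gamma(v)$, which must be excluded; since $H_{P,X}$ is used to pick the pivot as the vertex of maximum degree (as a proxy for maximizing $|P\cap\Gamma(u)|$), polluting it with $X$--$X$ edges breaks the pivot selection that the $3^{p/3}$ analysis relies on. This is fixable by an extra filter on the join (keep a triple $(v,a,b)$ only if $a$ or $b$ is a marked/$P$ endpoint), but it must be stated.

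Second, and more fundamentally, you compute all $H_v$ from the \emph{initial} $P$ and $X$ of the call, while \verb\BronKerboschPivot\ mutates these sets inside its loop: after the recursive call for $v$, lines 7--8 move $v$ from $P$ to $X$. Hence the graph that must be passed for a later vertex $v'$ is $H_{P\cap\Gamma(v'),X\cap\Gamma(v')}$ with respect to the \emph{current} $P$ and $X$, in which earlier-processed vertices have migrated to $X$; concretely, some edges lose their only $P$-endpoint and must be dropped, and the $P$/$X$ labels on surviving endpoints change. Your pipeline produces only what the paper calls the \emph{candidates} (the graphs as they would be if $P$ and $X$ never changed). The paper's proof spends an additional $p$ passes (procedure \texttt{updateH}, $O(scan(p^2(p+x)))$ I/O's each) precisely to apply these incremental updates, and that is where the extra factor in the stated bound $O(sort(\delta p^2(p+x)))$ comes from. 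Without an analogous updating step your output is not the sequence of graphs the recursion actually needs.
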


\begin{theorem}
Given a $\delta$-degeneracy ordering $L$ of an undirected $d$-degenerate graph $G$ ($\delta = (2+\epsilon)d$),
we can list all its maximal cliques in $O(3^{\delta/3}sort(\delta n))$ I/O's.
\end{theorem}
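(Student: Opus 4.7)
The plan is to transfer Eppstein {\it et al.}'s~\cite{els2010} internal-memory analysis of Algorithm~\ref{alg_max_cliques} into the external-memory setting, substituting their $d$-degeneracy ordering with the given $\delta$-degeneracy ordering $L$ and implementing each call to \texttt{BronKerboschPivot} via the constructions of Lemmas~\ref{alg_max_cliques_lemma_1} and~\ref{alg_max_cliques_lemma_2}.

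Correctness is inherited from~\cite{els2010}: their proof that every maximal clique of $G$ is reported exactly once uses only the property that every vertex has boundedly many \emph{later} neighbors in $L$, which our ordering satisfies with bound $\delta$. Carrying the graph $H_{P,X}$ as an auxiliary argument and selecting the pivot $u$ as a highest-degree vertex of $H_{P,X}$ is equivalent to Tomita~{\it et al.}'s~\cite{tomita2006} pivot rule, so the set of cliques produced is unchanged.

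For the I/O analysis, I first build all $n$ initial triples $(P, X, H_{P,X})$ using Lemma~\ref{alg_max_cliques_lemma_1}, costing $O(sort(\delta^2 n))$ I/O's. Each subsequent call to \texttt{BronKerboschPivot} uses Lemma~\ref{alg_max_cliques_lemma_2} to prepare its recursive children at cost $O(sort(\delta p^2(p+x)))$, where $(p,x)$ are the sizes of $P$ and $X$ at that call. Since the initial $|P|\le\delta$ and $|P|$ strictly shrinks with recursion depth, each of the $n$ recursion trees has depth at most $\delta$; by Tomita~{\it et al.}'s analysis of the pivot rule, each such tree has at most $3^{\delta/3}$ leaves and thus $O(3^{\delta/3})$ total nodes.

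The main obstacle is bounding the resulting sum
\begin{displaymath}
\sum_{i=1}^{n}\sum_{c\,\in\,\mathrm{tree}_i} \delta\, p_c^2\,(p_c+x_c)
\end{displaymath}
by $O(\delta\, n\, 3^{\delta/3})$. The ``$p_c$-only'' portion is immediate since $p_c\le\delta$ and there are $O(n\, 3^{\delta/3})$ total calls. The $p_c^2 x_c$ portion is harder: $x_c$ can be as large as $\deg(v_i)$ at the root of tree~$i$, but along any root-to-leaf path $x_c$ monotonically decreases (each recursive step intersects $X$ with $\Gamma(v)$). The resolution is to charge each unit of $x_c$ to an edge of $G$ incident to~$v_i$; since $\sum_i\deg(v_i)=2m\le 2\delta n$ and each such edge is recharged at most $O(3^{\delta/3})$ times (again by the pivoting bound), the total is $O(\delta\, n\, 3^{\delta/3})$. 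Converting the total work via $sort(\cdot)$ and absorbing polynomial-in-$\delta$ factors into the $3^{\delta/3}$ exponential then gives the claimed $O(3^{\delta/3}\,sort(\delta n))$ I/O bound.
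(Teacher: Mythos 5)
Your overall architecture matches the paper's: correctness is inherited from \cite{els2010,tomita2006}, the initial triples $(P,X,H_{P,X})$ come from Lemma~\ref{alg_max_cliques_lemma_1} at cost $O(sort(\delta^2n))$, and each recursive step pays $O(sort(\delta p^2(p+x)))$ by Lemma~\ref{alg_max_cliques_lemma_2}. Where you diverge is the final accounting, and that is where your argument has a concrete gap. You bound the global sum $\sum_i\sum_c \delta p_c^2(p_c+x_c)$ by multiplying a per-call worst case by the number of calls, plus a charging argument for the $x_c$ part. Since $p_c\le\delta$ and each recursion tree has $O(\delta\, 3^{\delta/3})$ nodes (the $3^{\delta/3}$ of Tomita {\it et al.} bounds the \emph{leaves}; you still need a depth factor, or a geometric sum over depths, for internal nodes), this route yields $O(\delta^4\cdot 3^{\delta/3}\cdot n)$ or so, not the claimed $O(\delta n\, 3^{\delta/3})$. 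Your closing move---``absorbing polynomial-in-$\delta$ factors into the $3^{\delta/3}$ exponential''---is not valid as stated: $\delta^4 3^{\delta/3}\neq O(3^{\delta/3})$, so you do not actually reach the advertised bound $O(3^{\delta/3}sort(\delta n))$ except in the regime $\delta=O(1)$.

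The paper avoids this loss by never summing worst cases over nodes: it writes the recurrence $\widehat{D}(p,x)\le\max_k\{k\widehat{D}(p-k,x)\}+O\big(sort(\delta p^2(p+x))\big)$, factors out $\log_{M/B}n/(DB)$ so that what remains is literally the recurrence $D(p,x)$ of Eppstein {\it et al.}, and imports its solution $D(p,x)=O((d+x)3^{p/3})$; the induction behind that solution performs the depth-weighted summation exactly (the per-level costs decay fast enough against the $3^{p/3}$ growth), so no extra $\mathrm{poly}(\delta)$ appears. Summing $O\big((\delta+|X_v|)3^{\delta/3}\big)$ over the $n$ roots, with $\sum_v|X_v|=O(\delta n)$, then gives the theorem cleanly. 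To repair your version, either invoke that recurrence solution as the paper does, or replace ``number of calls times maximum cost'' by a per-depth sum using $p_c\le\delta-j$ at depth $j$ together with the $3^{j/3}$ bound on the number of depth-$j$ nodes. (In fairness, the paper itself quietly assumes $\delta=O(1)$ when simplifying $\log_{M/B}(\delta p^2(p+x))$ to $O(\log_{M/B}n)$, so in the naturally sparse regime your looser count gives the same asymptotics; but as a proof of the stated bound for general $\delta$, the gap is real.)
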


\begin{proof}
Consider a call to \texttt{BronKerboschPivot($P_v$, $\{v\}$, $X_v$)}, with $|P_v| = p$ and $|X_v| = x$.
Define $\widehat{D}(p,x)$ to be the maximum number of I/O's in this call.
Based on Lemma \ref{alg_max_cliques_lemma_2}, $\widehat{D}(p,x)$ satisfies the following recurrence relation:
\begin{displaymath}
\widehat{D}(p,x) \leq \left\{
  \begin{array}{ll}
    \max_k\{k\widehat{D}(p-k,x)\} + O\big(sort(\delta p^2(p+x))\big) & \quad \textrm{if}\ p>0 \\
    e &\quad \textrm{if}\ p = 0
  \end{array}
\right.
\end{displaymath}
for constant $e$ greater than zero, which can be rewritten as
\begin{displaymath}
\widehat{D}(p,x) \leq \left\{
  \begin{array}{ll}
    \max_k\{k\widehat{D}(p-k,x)\} + c\cdot\frac{\delta p^2(p+x)}{DB}\log_{M/B}(\delta p^2(p+x)) & \quad \textrm{if}\ p>0 \\
    e &\quad \textrm{if}\ p = 0
  \end{array}
\right.
\end{displaymath}

for a constant $c>0$.
Since $p\leq \delta$ and $p+x\leq n$, we have $\log_{M/B}(\delta p^2(p+x)) \leq \log_{M/B}(\delta^3n) = O(\log_{M/B}n)$
for $\delta=O(1)$.
Thus, the relation for $\widehat{D}(p,x)$:
\begin{displaymath}
\widehat{D}(p,x) \leq \left\{
  \begin{array}{ll}
    \max_k\{k\widehat{D}(p-k,x)\} + \delta p^2(p+x)\cdot \frac{c'\log_{M/B}n}{DB} & \quad \textrm{if}\ p>0 \\
    e &\quad \textrm{if}\ p = 0
  \end{array}
\right.
\end{displaymath}
where $c'$ and $e$ are constants greater than zero. Note that this is the relation
for $D(p,x)$ of Eppstein {\it et. al}~\cite{els2010}
(we set $d = \delta$, $c_1=\frac{c'\log_{M/B}n}{DB}$ and $c_2 = e$).
Since the solution for $D(p,x)$ was $D(p,x) = O((d+x)3^{p/3})$, the solution for $\widehat{D}(p,x)$ is
\begin{displaymath}
\widehat{D}(p,x) = O\Big((\delta+x)3^{p/3}\cdot \frac{\scriptstyle{c'\log_{M/B}n}}{\scriptstyle{DB}}\Big) =
O\Big(\frac{\scriptstyle{\delta+x}}{\scriptstyle{DB}}3^{p/3}\log_{M/B}n\Big)
\end{displaymath}

The total size of all sets $X_v$ passed to initial calls to \texttt{BronKerboschPivot} is
$O(\delta n)$, and every set $P$ has at most $\delta$ vertices.
It follows that the total number of I/O's in recursive calls is
\begin{displaymath}
\sum_vO\Big(\frac{\scriptstyle{\delta+|X_v|}}{\scriptstyle{DB}}3^{\delta/3}\log_{M/B}n\Big) =
O\Big(3^{\delta/3}\frac{\scriptstyle{\delta n}}{\scriptstyle{DB}}\log_{M/B}n\Big) = O\big(3^{\delta/3}sort(\delta n)\big)
\end{displaymath}

Combining this with Lemma \ref{alg_max_cliques_lemma_1}, we get that our external memory
version of Algorithm \ref{alg_max_cliques} takes $O\big(sort(\delta^2n) + 3^{\delta/3}sort(\delta n)\big) =
O\big(3^{\delta/3}sort(\delta n)\big)$ I/O's. \qed
\end{proof}

\ifFull
\section{Conclusion} \label{conclusion_section}

TODO

\fi

\bibliographystyle{abbrv}
\bibliography{refs}

\appendix
\newpage

\section{Appendix}

\subsection{Proof of Lemma \ref{alg_degorder_lemma_2}}

The input is a $d$-degenerate graph $G = (V,E)$. Let $V = (1, \ldots, n)$.
We store $E$ as set of edges $(u,v)$.
We assume an order on $V$ that we utilize during sorting.

Finding vertices of smallest degree (Algorithm \ref{alg_degorder}, line 3)
is realized by the \texttt{smallVertices} procedure.

\vspace{\baselineskip}
\noindent
\fbox{
  \parbox{\linewidth}{
    \begin{algorithmic}[1]
    \REQUIRE{\texttt{smallVertices}}
    \FORALL{$v$ -- vertex of $G$}
      \STATE $d(v) \leftarrow\ $ degree of $v$
      \STATE store pair $(d(v),v)$ in $F$
    \ENDFOR
    \STATE sort $F$ lexicographically
    \STATE $S \leftarrow $ first $n\epsilon/(2+\epsilon)$ vertices in $F\qquad$ \emph{// of smallest degree}
    \end{algorithmic}
  }
}
\vspace{\baselineskip}

\noindent
Computing degrees of vertices in the \texttt{for} loop (lines 1-4)
is easily realized in $O(sort(dn))$ I/O's.
First, $E$ is sorted lexicographically in $O(sort(dn))$ I/O's.
After that, edges of $E$ form blocks ordered
by their starting vertex, so a simple scan taking $O(scan(dn))$ I/O's is enough
to determine degrees of the vertices.

Sorting $F$ (line 5) is done in $O(sort(n))$ I/O's. After that, $S$ is just $n\epsilon/(2 + \epsilon)$
first items of $F$ and its construction (line 6) takes $O(scan(n\epsilon/(2 + \epsilon)))$ I/O's.
Likewise, appending $S$ to $L$ (Algorithm \ref{alg_degorder}, line 4)
takes $O(scan(n\epsilon/(2 + \epsilon)))$ I/O's.

Finally, removing edges adjacent to $S$ from $G$ (Algorithm \ref{alg_degorder}, line 5) is realized as follows.

\vspace{\baselineskip}
\noindent
\fbox{
  \parbox{\linewidth}{
    \begin{algorithmic}[1]
    \STATE sort $E$ lexicographically
    \FORALL{$(u,v)$ -- edge in $E$}
      \IF{$u \in S$}
        \STATE add tuples $(u, v, \texttt{"-"})$ and $(v, u, \texttt{"-"})$ to $E$
      \ENDIF
    \ENDFOR
    \STATE sort $E$ lexicographically
    \FORALL{$p$, $q$ -- consecutive tuples in $E$}
      \IF{$p = (u, v)$ and $q = (u, v, \texttt{"-"})$}
        \STATE do not write $p$ back to $E$
      \ELSIF{$p = (u, v)$}
        \STATE write $p$ back to $E$
      \ELSE[$\qquad// p = (u,v, \texttt{"-"})$]
        \STATE do not write $p$ back to $E$
      \ENDIF
    \ENDFOR
    \end{algorithmic}
  }
}
\vspace{\baselineskip}

\noindent
Sorting $E$ in line 1 takes $O(sort(dn))$ I/O's.
The first \texttt{for} loop (lines 2-6) takes $O(scan(dn))$ I/O's (vertices in $S$ are stored
according to the order on $V$, in the same relative order
as the origins of edges in $E$, so it is realized by a single synchronized scan
going through $E$ and $S$ at the same time).
Each edge in $E$ causes at most 2 tuples to be added to $E$ in line 4,
so clearly the size of $E$ is $O(dn)$ after line 6.
Sorting $E$ (line 7) obviously takes time $O(sort(dn))$.

The last \texttt{for} loop (lines 8-16) is easily realized by a single scan of $E$, in
$O(scan(dn))$ I/O's. Correctness follows from two facts. First, for each edge $(u,v)$, if $u\in S$,
the tuple $(u,v,\texttt{"-"})$ (meaning that $(u,v)$ does not belong to $G$ after this iteration)
is added to $E$ in line 4. Second, if $(u,v, \texttt{"-"})$ is in $E$, its direct predecessor
is $(u,v)$ (or another copy of $(u,v, \texttt{"-"})$) after $E$ was sorted lexicographically in line 7.
Therefore, the edges that are no longer in $G$ are rejected in line 10. Also, no tuples
$(u,v,\texttt{"-"})$ are further stored in $E$ (line 14).

Altogether, lines 3-5 of Algorithm \ref{alg_degorder} are implemented in $O(sort(dn))$ I/O's. \qed

\subsection{Representatives}

Monien~\cite{monien_paths} gave
a simple algorithm (which we call \texttt{repQuery}) that
operates on representatives in the tree form described in Sec. \ref{cycles_section}.
Given $\widehat{\mathcal{F}}$ --
a representative for $\mathcal{F}$, \texttt{repQuery($\widehat{\mathcal{F}}$, $B$)}
decides for a $q$-set $B$ whether there exists a set $A \in \mathcal{F}$ s.t.
$A \cap B = \emptyset$, and returns such $A$ if it exists. The running time of
\texttt{repQuery} is $O(pq)$ in the RAM model.
We use \texttt{repQuery} in our algorithms "as is", i.e.,
we allow it to take $O(pq)$ I/O's.

\subsubsection{Proof of Lemma \ref{representative_creation}}

The size of the resulting tree is is bounded by $\sum_{i=1}^{q}p^i$
(number of nodes in a $p$-ary tree of height $q$).
We can afford to build the tree one node at a time,
spending $O\big(scan(p|\mathcal{F}|)\big)$ I/O's on each node.
Procedure \texttt{repLabel} labels vertex $v$ in the representative tree for $\mathcal{F}$
in $O\big(scan(p|\mathcal{F}|)\big)$ I/O's.

\vspace{\baselineskip}
\noindent
\fbox{
  \parbox{\linewidth}{
    \begin{algorithmic}[1]
    \REQUIRE \texttt{repLabel($\mathcal{F}$, $v$)}
    \STATE $\widehat{A} \leftarrow$ set $A$ in $\mathcal{F}$ s.t. $A\cap E(v)=\emptyset$
    \IF{$\widehat{A} \neq \epsilon$}
      \STATE label $v$ with $\widehat{A}$
      \STATE create children of $v$
      \STATE label edges from $v$ to its children with elements from $\widehat{A}$
    \ELSE
      \STATE label $v$ with $\lambda$
    \ENDIF
    \end{algorithmic}
  }
}
\vspace{\baselineskip}

\noindent
Set $\widehat{A}$ in line 1 can simply be found by scanning $\mathcal{F}$
in $O\big(scan(p|\mathcal{F}|)\big)$ I/O's ($p|\mathcal{F}|$ is the total size of
all $p$-sets in $\mathcal{F}$). Creating children of $v$ (line 4) and labeling
their edges (line 5) takes $O(p)$ I/O's.

Therefore, we compute a $q$-representative for $\mathcal{F}$ in
$O\Big(\big(\sum_{i=1}^q p^i\big)\cdot scan\big(p|\mathcal{F}|\big)\Big) =
O\Big(\big(\sum_{i=1}^q p^{i+1}\big)\cdot scan\big(|\mathcal{F}|\big)\Big) =
O\Big(\big(\sum_{i=1}^{q+1} p^i\big)\cdot scan\big(|\mathcal{F}|\big)\Big)$ I/O's.
\qed

\subsubsection{Proof of Lemma \ref{representative_find_empty_intersection_lemma}}
The proof is essentially the same as that of Lemma 3.2 in \cite{short_cycles}:
first we compute a $q$-representative $\widehat{\mathcal{F}}$ of $\mathcal{F}$,
in $O\Big(\big(\sum_{i=1}^{q+1}\big)\cdot scan\big(|\mathcal{F}|\big)\Big)$ I/O's,
and a $p$-representative $\widehat{\mathcal{G}}$ of $\mathcal{G}$,
in $O\Big(\big(\sum_{i=1}^{p+1}\big)\cdot scan\big(|\mathcal{G}|\big)\Big)$ I/O's.

The sizes of $\widehat{\mathcal{F}}$ and $\widehat{\mathcal{G}}$ are bounded by
$\sum_{i=1}^q p^i$ and $\sum_{i=1}^p q^i$, respectively.
Assuming $p\geq q$ (w.l.o.g.), determining whether
$\widehat{\mathcal{F}}$ and $\widehat{\mathcal{G}}$ contain
two disjoint sets can be easily done in $O\big(\sum_{i=1}^{q+1}p^i\cdot pq\big) =
O\big(\sum_{i=1}^{q+1}p^{i+2}\big) = O\big(\sum_{i=1}^{q+3}p^i\big)$ I/O's,
by querying $\widehat{\mathcal{G}}$ (via \texttt{repQuery}) with all sets from $\widehat{\mathcal{F}}$.
\qed

\subsubsection{Proof of Lemma \ref{cycle_through_v_lemma}}

The (very) big picture of the \texttt{cycleThrough($G$, $k$, $v$)} algorithm is as follows:

\vspace{\baselineskip}
\noindent
\fbox{
  \parbox{\linewidth}{
    \begin{algorithmic}[1]
    \REQUIRE \texttt{cycleThrough($G$, $k$, $v$)}
    \FORALL{$u$ s.t. $(v, u) \in E$}
      \IF{there exists simple path $p$: $u \leadsto v$ of length exactly $k - 1$} \label{alg_cycle_through_v_if}
        \RETURN $p \cup (v,u)$
      \ENDIF
    \ENDFOR
    \RETURN $\epsilon$
    \end{algorithmic}
  }
}
\vspace{\baselineskip}

\noindent
Obviously, main difficulty lies in checking the condition in line \ref{alg_cycle_through_v_if}.
To show how we answer that query, let us first explain how \cite{monien_paths}
handles the paths.

Let $\mathcal{P}_{uv}^p$ denote the set of all simple paths from $u$ to $v$ of length exactly $(p + 1)$
(so that these paths have $p$ inner vertices).
First, all paths from $u$ to $v$ of length $(p + 1)$ that have the exact same set of inner vertices
are represented as a single set containing these vertices (the vertices are stored as
one of the paths; it provides a representative of the set).
Performing this compression on $\mathcal{P}_{uv}^p$ yields $\mathcal{F}_{uv}^p$ -- a family of $p$-sets:
\begin{displaymath}
\mathcal{F}_{uv}^p = \{S\colon S \,\textrm{is a set of inner vertices on some path from}\, u\, \textrm{to}\,
v\, \textrm{of length}\, p+1 \}
\end{displaymath}
The condition from line \ref{alg_cycle_through_v_if} of \texttt{cycleThrough}
is therefore equivalent to $\mathcal{F}_{uv}^{k-2}$ being nonempty. We will now focus on how to test if
this is the case.

The clou of \cite{monien_paths} was that having $q$-representatives for $\mathcal{F}_{uv}^p$
(for all $u \in V$) enables efficient computation of $(q-1)$-representatives
for $\mathcal{F}_{uv}^{p+1}$ (for all $u \in V$).
The labels for a $(q-1)$-representative tree for $\mathcal{F}^{p+1}_{uv}$ are computed
node by node. The algorithm is based on the following observation ($\gamma$ is the node
whose label we compute, $E(\gamma)$ is the set of edge labels on the way from $\gamma$ to root):
\begin{displaymath}
\begin{array}{c}
\exists U \in \mathcal{F}_{uv}^{p+1} \ \textrm{s.t.} \ U\cap E(\gamma) = \emptyset \\
\iff \\
\exists w \in V \setminus \{u,v\} \ \textrm{s.t.} \ (u,w) \in E \ \land \ w \notin E(\gamma) \ \land \\
\exists \widehat{U} \in \mathcal{F}_{wv}^p \ \textrm{s.t.} \ \widehat{U} \cap
(E(\gamma) \cup \{u\}) = \emptyset
\end{array}
\end{displaymath}
Having a $q$-representative for $\mathcal{F}_{wv}^p$ allows us to find $\widehat{U}$ (or determine
that it does not exist) via the \texttt{repQuery} algorithm.
Determining the label for $\gamma$ is therefore realized as follows:

\vspace{\baselineskip}
\noindent
\fbox{
  \parbox{\linewidth}{
    \begin{algorithmic}[1]
    \FORALL{$w$ s.t. $(u,w) \in E$ and $w \notin E(\gamma)$}
      \STATE $\widehat{U} \leftarrow$ \texttt{repQuery}$(\mathcal{F}_{wv}^p,\, E(\gamma) \cup \{u\})$
      \IF{$\widehat{U} \neq \epsilon$}
        \STATE label $\gamma$ with $\widehat{U} \cup \{w\}$
        \RETURN
      \ENDIF
    \ENDFOR
    \STATE label $\gamma$ with $\lambda$
    \end{algorithmic}
  }
}
\vspace{\baselineskip}

\texttt{repQuery} (querying a representative tree) takes $O(pq)$ I/O's, so
our implementation of labeling $\gamma$ takes
$O\big(pq\cdot scan\big(\Gamma(u)\big)\big)$ I/O's
(where $\Gamma(u)$ denotes the number of neighbors of $u$ in $G$).
It simply scans neighbors of $u$ and calls \texttt{repQuery} accordingly.
Because the $(q-1)$-representative tree for $\mathcal{F}_{uv}^{p+1}$ has size
bounded by $\sum_{i=1}^{q-1}(p+1)^i \leq q(p+1)^{q+1}$, labeling all its nodes requires
$O\Big(q(p+1)^{q-1}\cdot\big(pq\cdot scan\big(\Gamma(u)\big)\big)\Big) =
O\Big(q^2(p+1)^q\cdot scan\big(\Gamma(u)\big)\Big)$ I/O's.
We are computing $(q-1)$-representatives for
$\mathcal{F}_{uv}^{p+1}$'s for all $u$'s, so it takes
$O\Big(\sum_u\big(q^2(p+1)^q\cdot scan\big(|\Gamma(u)|\big)\big)\Big) =
O\Big(q^2(p+1)^q\cdot\sum_u scan\big(|\Gamma(u)|\big)\Big) = O\big(q^2(p+1)^q\cdot scan(m)\big)$
I/O's in total.

Our goal is to compute $0$-representatives $\widehat{\mathcal{F}}_{uv}^{k-2}$ for $\mathcal{F}_{uv}^{k-2}$
(for all $u\in V$). Then, by calling \texttt{repQuery($\widehat{\mathcal{F}}_{uv}^{k-2}$, $\emptyset$)},
we determine whether $\mathcal{F}_{uv}^{k-2}$ is nonempty, as it either returns $\epsilon$
(if $\mathcal{F}_{uv}^{k-2}$ is empty), or a set $A\in\mathcal{F}_{uv}^{k-2}$, representing
a path from $u$ to $v$ of length $k-1$.

We start with $(k-2)$-representatives for $\mathcal{F}_{uv}^0$'s. They are built as trees
having only the root vertex, labeled with either $\emptyset$ (if $(u,v)\in E$), or $\lambda$ (otherwise).
They can be obviously constructed in $O(scan(m))$ I/O's, via scanning $E$.
Based on the discussion above, computing $\widehat{\mathcal{F}}_{uv}^{k-2}$'s (for all $u\in V$) takes
\begin{eqnarray*}
& O\big(\sum_{p=0}^{k-3}(k-2-p)(p+1)^{k-2-p}\cdot scan(m)\big) & \\
& = O\big((k-2)\cdot scan(m)\cdot \sum_{p=0}^{k-3}(p+1)^{k-2-p}\big) & \\
& = O\big((k-2)\cdot scan(m)\cdot (k-2)!\big) & \\
& = O\big((k-1)!\cdot scan(m)\big) &
\end{eqnarray*}
I/O's, as it can be easily shown by induction that
$\sum_{p=0}^{k-3}(p+1)^{k-2-p} \leq (k-2)!$ for $k\geq6$.

\subsection{Path Generation}

\subsubsection{General Graphs (Algorithm \ref{alg_general_cycles})}

Recall that for Algorithm \ref{alg_general_cycles} we need to generate all directed paths of length $k$
in a graph $G=(V,E)$, where maximum degree of each vertex is bounded by $\Delta$. As shown in
\cite{short_cycles}, there are at most $O(m\Delta^{k-1}) = O(m^{2-\frac{1}{k}})$ such paths.
They are generated by the \texttt{pathGen} procedure.

\vspace{\baselineskip}
\noindent
\fbox{
  \parbox{\linewidth}{
    \begin{algorithmic}[1]
    \REQUIRE \texttt{pathGen}
    \STATE generate all sequences of length $k-1$, with elements from $\{1, \ldots, \Delta\}$
    \FORALL{$s$ -- sequence $\in \{1, \ldots, \Delta\}^{k-1}$}
      \STATE generate sequences $e|s$, for all $e \in E$
    \ENDFOR
    \FORALL{$s$ -- sequence $\in E\times\{1, \ldots, \Delta\}^{k-1}$}
      \STATE decode $s$ into $s' \in E^k$
      \IF{$s' \neq \epsilon$}
        \IF{$s'$ is a simple path}
          \PRINT $s'$
        \ENDIF
      \ENDIF
    \ENDFOR
    \end{algorithmic}
  }
}
\vspace{\baselineskip}

\noindent
Line 1 is simply realized in $O\big(scan(\Delta^{k-1})\big) = O\big(scan(m^{1-\frac{1}{k}})\big)$ I/O's.
Adding an edge at the beginning
of each sequence in the first \texttt{for} loop (lines 2-4) takes $O\big(m\cdot scan(\Delta^{k-1})\big)
= O\big(scan(m^{2-\frac{1}{k}})\big)$ I/O's.

Decoding a sequence $s \in E\times\{1, \ldots, \Delta\}^{k-1}$ into a path $s' \in E^k$
(\texttt{pathGen}, line 6) is conceptually straightforward.
$e$ is the first edge in the path. Then, each consecutive number $i$ determines next vertex on the path
-- $i$th neighbor of the previously decoded one (if it has less than $i$ neighbors,
the path is dropped as invalid).
The \texttt{decodePaths} procedure handles decoding of $S$ -- the set of sequences.
$s\texttt{[i]}$ is the $i$th element of tuple $s$, $s\texttt{[i].from}$ is the origin,
and $s\texttt{[i].to}$ is the destination vertex of edge at $s\texttt{[i]}$.

The first \texttt{for} loop (lines 1-3) decodes the starting edge of the sequence $s$ into two vertices, and then
cyclically shifts the resulting tuple by one position. It takes $O\big(scan(m^{2-\frac{1}{k}})\big)$ I/O's.
Each iteration of the second \texttt{for} loop (lines 4-12) decodes the next vertex of $s$ and
again cyclically shifts $s$ by one position.
Sorting $S$ in line 5 takes $O\big(sort(m^{2-\frac{1}{k}})\big)$ I/O's.
After that, the inner loop (lines 6-11) requires only $O\big(scan(m^{2-\frac{1}{k}})\big)$ I/O's
(it takes one synchronized scan of $S$ and $E$).
Invalid paths that do not meet the condition at line 8 are dropped.
Since the outer \texttt{for} loop runs for $k-1$ iterations,
\texttt{decodePaths} uses $O\big(k\cdot sort(m^{2-\frac{1}{k}})\big)$ I/O's.

\vspace{\baselineskip}
\noindent
\fbox{
  \parbox{\linewidth}{
    \begin{algorithmic}[1]
    \REQUIRE \texttt{decodePaths($S$)}
    \FORALL{$s$ -- sequence $\in E\times\{1,\ldots,\Delta\}^{k-1}$}
      \STATE write tuple $(s\texttt{[1].to}, s\texttt{[2]}, s\texttt{[3]}, \ldots,
        s\texttt{[k]}, s\texttt{[1].from})$ to $S$
    \ENDFOR
    \FORALL[\emph{$\quad$ // $S$ contains tuples
      $V\times\{1,\ldots,\Delta\}^{k-i}\times V^{i}$}]{$i\leftarrow1,\ldots, k-1$}
      \STATE sort $S$ lexicographically
      \FORALL{$s$ -- tuple in $S$}
        \STATE $u\leftarrow s\texttt{[2]}$th neighbor of $s\texttt[1]$ in $V$
        \IF{$u\neq\epsilon$}
          \STATE write tuple $(u, s\texttt{[3]}, s\texttt{[4]}, \ldots, s\texttt{[k+1]}, s\texttt{[1]})$ back to $S$
        \ENDIF
      \ENDFOR
    \ENDFOR
    \end{algorithmic}
  }
}
\vspace{\baselineskip}

\noindent

Verifying that a path is simple (\texttt{pathGen}, line 8) is done by checking that
it does not contain repeated vertices.
Thus, \texttt{pathGen} takes $O\big(sort(m^{2-\frac{1}{k}})\big)$ I/O's.

\subsubsection{Degenerate Graphs (Algorithm \ref{alg_degenerate_cycles})}

For Algorithm \ref{alg_degenerate_cycles}, we need to generate all paths of length $2k+1$ that
start with two backward-oriented (in $L$) edges. As shown in \cite{short_cycles}, there are at most
$O\big(m\sum_{i=0}^k\binom{2k}{i}\Delta^i\delta^{2k-i}\big) = O(2^{2k}m\Delta^k\delta^k)$ paths of length $2k+1$
in $G$. It follows from the fact that for each path $p$, either $p$ or $p^\mathrm{R}$ (the reverse
of $p$) has at most $k$ edges with opposite directions than in $L$.

The procedure \texttt{pathGenForward} generates paths $p$ that have at most $k$ edges with opposite
directions than in $L$.

\vspace{\baselineskip}
\noindent
\fbox{
  \parbox{\linewidth}{
    \begin{algorithmic}[1]
    \REQUIRE \texttt{pathGenForward}
    \STATE generate sequences $s = u|v$, for all $(u, v) \in E$
    \FOR{$i \leftarrow 1\ldots2k$}
      \FORALL{$s$ -- sequence $\in V^2\times
        (\{\texttt{"L"}, \texttt{"E"}\}\times\{1,\ldots,\Delta\})^{i-1}$}
        \STATE generate all sequences $s|(\texttt{"L"}, j)$, for $j \in \{1,\ldots,\delta\}$
        \IF{$s$ has $< k$ pairs of the type $(\texttt{"E"}, j)$}
          \STATE generate all sequences $s|(\texttt{"E"}, j)$, for $j \in \{1,\ldots,\Delta\}$
        \ENDIF
      \ENDFOR
    \ENDFOR
    \FORALL{$s$ -- sequence $\in V^2\times
      \big(\big\{\texttt{"L"}, \texttt{"E"}\big\}\times\big\{1,\ldots,\max\{\delta,\Delta\}\big\}\big)^{2k}$}
      \STATE decode $s$ into $s' \in E^{2k+1}$
      \IF{$s' \neq \epsilon$}
        \IF{$s'$ is a simple path}
          \PRINT $s'$
        \ENDIF
      \ENDIF
    \ENDFOR
    \end{algorithmic}
  }
}

\vspace{\baselineskip}
\noindent
The encoding of the sequences works as follows: it starts with two vertices, $u$ and $v$,
that represent the starting edge of the path.
$v$ is followed by $2k$ pairs of the format $(\texttt{"L"}, i)$ (with $i\in\{1,\ldots,\delta\}$)
or $(\texttt{"E"}, i)$ (with $i\in\{1,\ldots,\Delta\}$).
$(\texttt{"L"}, i)$ means that the next vertex is the $i$th neighbor of the current vertex in
degeneracy ordering $L$, and $(\texttt{"E"}, i)$ means that the next vertex is the $i$th
neighbor of the current vertex in $E$.
Any sequence $s$ has at most $k$ pairs
of the type $(\texttt{"E"}, j)$, and only edges represented by them may have
the opposite direction than in $L$.

The number of sequences generated by the first \texttt{for} loop (lines 2-9)
is clearly $O\big(m\sum_{i=0}^k\binom{2k}{i}\Delta^i\delta^{2k-i}\big) = O(2^{2k}m\Delta^k\delta^k)$,
and the whole generation process takes $O\big(scan(2^{2k}m\Delta^k\delta^k)\big)$ I/O's.
The sequences are decoded into paths in the second \texttt{for} loop (lines 10-17) in a manner
similar to \texttt{decodePaths}, using $O\big((2k+1)\cdot sort(2^{2k}m\Delta^k\delta^k)$ I/O's.

Paths of length $2k+1$, that have at most $k$ edges in the opposite direction than in $L$
when they are read backwards, can be generated by an analogous procedure \texttt{pathGenBackward}
(using $E^{\mathrm{R}}$ -- reversed edges instead of $E$), with
the same I/O complexity.

Procedure \texttt{pathGen2} generates all paths of length $2k+1$ that start with
two backward-oriented (in $L$) edges.

\vspace{\baselineskip}
\noindent
\fbox{
  \parbox{\linewidth}{
    \begin{algorithmic}[1]
    \REQUIRE \texttt{pathGen2}
    \STATE generate paths of length $2k-1$, via \texttt{pathGenForward}
    \STATE generate paths of length $2k-1$, via \texttt{pathGenBackward}
    \FORALL{$s$ -- generated path of length $2k-1$}
      \STATE generate all sequences $i|j|s$, for $(i,j) \in \{1,\ldots,\delta\}^2$
    \ENDFOR
    \FORALL{$s$ -- sequence $\in \{1,\ldots,d\}^2\times E^{2k-1}$}
      \STATE decode $s$ into $s' \in E^{2k+1}$
      \IF{$s' \neq \epsilon$}
        \IF{$s'$ is a simple path}
          \PRINT $s'$
        \ENDIF
      \ENDIF
    \ENDFOR
    \end{algorithmic}
  }
}

\vspace{\baselineskip}
Paths generated in lines 1-2 are the \emph{tails} of the resulting paths. The two numbers in sequences added
to these paths in line 4 denote the edges in $L$ that are to be taken to determine first two vertices
on the final path, starting at the tail's first vertex. This assures that these edges are backward-oriented
in $L$.

The number of paths generated by \texttt{pathGen2} is
$O(\delta^2\cdot2^{2k-2}m\Delta^{k-1}\delta^{k-1}) = O(2^{2k-2}m\Delta^{k-1}\delta^{k+1})
= O(2^{2k-2}m^{2-\frac{1}{k}}\delta^{1+\frac{1}{k}})$, and its I/O complexity is
$O\big(k\cdot sort(2^{2k-2}m^{2-\frac{1}{k}}\delta^{1+\frac{1}{k}})\big)$.

Since generating all paths of length $2k$ that begin with a backward-oriented (in $L$) edge
is essentially the same as \texttt{pathGen2}, path generation in this case
requires $O\big(k\cdot sort(2^{2k-2}m^{2-\frac{1}{k}}\delta^{1+\frac{1}{k}})\big)$ I/O's.

\subsection{Graph Reordering} \label{graph_reordering}
Assume we are given a graph $G=(V,E)$, with $V = (1,2,\ldots,n)$,
and $d$-degeneracy ordering $L=(v_1, v_2, \ldots, v_n)$ of $V$.
Our goal is to \emph{reorder} $G$ according to $L$, i.e.,
substitute each edge $(v_i, v_j)\in E$ with an edge $(i,j)$.

\vspace{\baselineskip}
\noindent
\fbox{
  \parbox{\linewidth}{
    \begin{algorithmic}[1]
    \REQUIRE \texttt{reorderG($G$, $L$)}
    \FORALL{$k\leftarrow 1,2$}
      \FORALL{$v_i$ -- $i$th vertex in $L$}
        \STATE append tuple $(v_i, \texttt{"$i$"})$ to $E$
      \ENDFOR
      \STATE sort $E$ lexicographically
      \FORALL{$p$ -- tuple in $E$}
        \IF{$p = (u, v)$}
          \STATE $q\leftarrow$ tuple $(u, \texttt{"$i$"})\qquad$ \emph{// precedes $p$}
          \STATE write $(v, i)$ back to $E$
        \ELSE
          \STATE do not write $(v, i)$ back to $E$
        \ENDIF
      \ENDFOR
    \ENDFOR
    \end{algorithmic}
  }
}
\vspace{\baselineskip}

\noindent
A single iteration of the outer \texttt{for} loop (lines 1-14) first renames origins of edges in $E$
and then reverts them (thus, after 2 iterations edges have their original directions).
First, it adds vertices along with their positions in $L$ to $E$ in lines 2-4. This takes $O(scan(n))$ I/O's.
Then it sorts E (line 5) in $O(sort(dn))$ I/O's. The next \texttt{for} loop (lines 6-13) scans
$E$, renames origins of edges to their positions in $L$ and outputs their opposite versions.
The tuples $q$ obtained in line 8 can clearly be found in the same single scan (tuple $(u, \texttt{"$i$"})$
directly precedes all edges $(u,v)$ after $E$ was sorted), so this part is done in $O(scan(dn))$ I/O's.

Therefore, \texttt{reorderG} runs in $O(sort(dn))$ I/O's altogether.

\subsection{Proof of Lemma \ref{alg_max_cliques_lemma_1}}
First, observe that the total size of all sets $P$ and $X$ passed to initial
calls to \texttt{BronKerboschPivot} is $O(m) = O(\delta n)$. To see this, note that every edge $(v_i, v_j)$
(with $v_i$ preceding $v_j$ in $L=(v_1, v_2, \ldots, v_n)$) puts $v_i$ into initial $X$ for
$R = \{v_j\}$ and $v_j$ into initial $P$ for $R = \{v_i\}$. Since the size of each $P$ is at
most $\delta$, the total number of edges in all $H_{P,X}$'s is $O(\delta^2n)$.

Our approach is to generate and store all $P$'s, $X$'s and $H_{P,X}$'s at the very beginning
of the algorithm, and then just pass
appropriate $P$, $X$, and $H_{P,X}$ to each initial call to \texttt{BronKerboschPivot}.

Procedure \texttt{genPX} generates all initial $P$'s and $X$'s in $O(sort(\delta n))$ I/O's.
We assume that for each vertex $v\in V$ we know its position in $L$ (i.e., we know $i$ for $v=v_i$).
Also, for each edge $(u,v)\in E$, we know positions in $L$ of its endpoints (i.e.,
we know both $i$ and $j$ for $(u,v) = (v_i,v_j)$).
We can easily achieve this in $O(sort(\delta n))$ I/O's (see Sec. \ref{graph_reordering}).
By $u \stackrel{\scriptscriptstyle{L}}{\prec} v$, we denote that $u$ precedes $v$ in $L$.
The \texttt{for} loop (lines 1-5) clearly takes $O(scan(\delta n))$ I/O's.

\vspace{\baselineskip}
\noindent
\fbox{
  \parbox{\linewidth}{
    \begin{algorithmic}[1]
    \REQUIRE \texttt{genPX($G$, $L$)}
    \FORALL{$(v_i, v_j)$ -- edge in $E$}
      \IF{$v_i \stackrel{\scriptscriptstyle{L}}{\prec} v_j$}
        \STATE output tuples $(v_i, \texttt{"P"}, v_j)$, and $(v_j, \texttt{"X"}, v_i)$ to set $F$
      \ENDIF
    \ENDFOR
    \STATE sort $F$ lexicographically
    \STATE scan $F$ to create sets $P$ and $X$
    \end{algorithmic}
  }
}
\vspace{\baselineskip}

\noindent

The meaning of $(v_i, \texttt{"P"}, v_j)$ is "add $v_j$ to $P_{v_i}$", and
of $(v_j, \texttt{"X"}, v_i)$: "add $v_i$ to $X_{v_j}$". The above discussion
explains why this information allows to generate all $P$'s and $X$'s.
Therefore, after set $F$ is sorted in $O(sort(\delta n))$ I/O's in line 6,
generation of $P$'s and $X$'s in line 7 takes $O(scan(\delta n))$ I/O's.

Procedure \texttt{genH} generates all initial $H_{P,X}$'s in $O(sort(\delta^2n))$ I/O's.
Each vertex has at most $\delta$ neighbors that succeed it in $L$, so there are $O(\delta^2n)$ tuples
added to $E$ in the first \texttt{for} loop (lines 1-5). They can be generated in
$O(scan(\delta^2n))$ I/O's, as the edges in $E$ are sorted lexicographically.

\vspace{\baselineskip}
\noindent
\fbox{
  \parbox{\linewidth}{
    \begin{algorithmic}[1]
    \REQUIRE \texttt{genH($G$, $L$)}
    \FORALL{$u$ -- vertex in $V$}
      \FORALL{$v, w$ -- neighbors of $u$ s.t.
        $u \stackrel{\scriptscriptstyle{L}}{\prec} v \stackrel{\scriptscriptstyle{L}}{\prec} w$}
        \STATE append tuple $(v, w, ?u?)$ to $E$
      \ENDFOR
    \ENDFOR
    \STATE sort $E$ lexicographically
    \FORALL{$p$ -- tuple in $E$}
      \IF{$p=(v,w,?u?)$}
        \STATE $q\leftarrow$ tuple of form $(v_i, v_j)$ immediately preceding $p$ in $E$
        \IF{$q=(v,w)$}
          \STATE output tuples $(H_v, u, w)$ and $(H_v, w, u)$ to set $H$
        \ENDIF
        \STATE do not write $p$ back to $E$
      \ELSE[$\quad//p = (v, w)$]
        \STATE write $p$ back to $E$
      \ENDIF
    \ENDFOR
    \STATE sort $H$ lexicographically
    \STATE scan $H$ to create sets $H_v$
    \STATE scan edges of $H_v$'s and mark the endpoints that belong to $P_v$'s
    \end{algorithmic}
  }
}
\vspace{\baselineskip}

To understand the meaning of tuples from line 3, refer to Fig. \ref{fig_h_graph}.

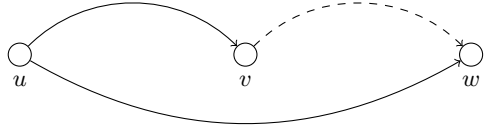
\begin{figure} 
\begin{center}
\begin{tikzpicture}

\node (u_node) at (0,0) [circle,draw] {};
\node (v_node) at (3,0) [circle,draw] {};
\node (w_node) at (6,0) [circle,draw] {};

\node [below] at (u_node.south) {$u$};
\node [below] at (v_node.south) {$v$};
\node [below] at (w_node.south) {$w$};

\draw [->] (u_node) to [bend left=45] (v_node);
\draw [->] (v_node) to [dashed,bend left=45] (w_node);
\draw [->] (u_node) to [bend right=30] (w_node);

\end{tikzpicture}
\end{center}
\caption{$v$ and $w$ are neighbors of $u$, and
$u \stackrel{\scriptscriptstyle{L}}{\prec} v \stackrel{\scriptscriptstyle{L}}{\prec} w$,
so the tuple $(v, w, ?u?)$ is output in \texttt{genH}, line 3.
If~$(v,w)\in E$, the edge $(u,w)$ is in $H_v$.}
\label{fig_h_graph}
\end{figure} 

After $E$ is sorted in line 6 in $O(sort(\delta^2n))$ I/O's, the next \texttt{for} loop (lines 7-17)
identifies edges that belong to sets $H_v$. In $E$, each edge $(v,w)$ is followed by zero or
more tuples of the form $(v,w,?u?)$. This makes it easy to determine $q$ in line 9, and therefore,
the loop takes $O(scan(\delta^2n))$ I/O's.

As explained in Fig. \ref{fig_h_graph}, $(v,w)$ followed by $(v,w,?u?)$ means that
the edge $(u,w)$ has to be added to $H_v$. It is denoted by the tuples $(H_v, u, w)$
and $(H_v, w, u)$ output in line 11.
Sorting $H$ in line 18 takes $O(sort(\delta^2n))$ I/O's,
and after that, $H_{P,X}$'s are generated
in line 19 in $O(scan(\delta^2n))$ I/O's. The marking of endpoints (line 20) also takes
$O(scan(\delta^2n))$ I/O's.

Therefore, total complexity of generating initial $P$'s, $X$'s, and $H_{P,X}$'s is
$O(sort(\delta^2n))$ I/O's. \qed

\subsection{Proof of Lemma \ref{alg_max_cliques_lemma_2}}
\vspace{\baselineskip}
\noindent
\fbox{
  \parbox{\linewidth}{
    \begin{algorithmic}[1]
    \REQUIRE \texttt{updateH($v$)}
    \FORALL{$H_{P,X}$ -- \emph{candidate}}
      \FORALL{$e$ -- edge in $H_{P,X}$}
        \IF{$e=(u,v)$ or $e=(v,w)$}
          \STATE unmark $v$ in $e$
        \ENDIF
      \ENDFOR
      \FORALL{$e$ -- edge in $H_{P,X}$}
        \IF{at least one vertex of $e$ is marked}
          \STATE write $e$ back to $H_{P,X}$
        \ELSE
          \STATE do not write $e$ back to $H_{P,X}$
        \ENDIF
      \ENDFOR
    \ENDFOR
    \end{algorithmic}
  }
}
\vspace{\baselineskip}

Recall that $|P| = p$ and $|X| = x$.
Our idea in computing $H_{P,X}$'s is to first generate \emph{candidates}
for $H_{P,X}$'s.
\emph{Candidates} are defined as $H_{P,X}$'s as they would be if there
were no lines 7 and 8 in \texttt{BronKerboschPivot} (i.e., as if $P$ and $X$ did not change).
We then update the \emph{candidates} according to lines 7 and 8 of \texttt{BronKerboschPivot}.

Generation of the \emph{candidates} for $H_{P,X}$'s is almost the same as in \texttt{genH},
only $v$ and $w$ in line 2 are now taken from $P$, so it uses $O(sort(p^2(p+x)))$ I/O's.
Updating $H_{P,X}$'s (moving $v$ from $P$ to $X$)
is realized by the procedure \texttt{updateH}.

Both inner \texttt{for} loops (lines 2-6 and 7-13) clearly take $O(scan(|H_{P,X}|))$ I/O's.
Since the total size of all \emph{candidates} is $O(p^2(p+x))$,
a single call to \texttt{updateH}
takes $O(scan(p^2(p+x)))$ I/O's. There are at most $p$ such calls, so generating
$H_{P,X}$'s that are passed to recursive calls in line 6 of \texttt{BronKerboschPivot} takes
$O(sort(p^2(p+x)) + p\cdot scan(p^2(p+x))) = O(sort(\delta p^2(p+x)))$ I/O's. \qed

\end{document}